\def\prob#1#2#3{\goodbreak\begin{list}{}{\labelwidth\z@ \itemindent-\leftmargin
                        \itemsep\z@  \topsep6\p@\@plus6\p@
                        \let\makelabel\descriptionlabel}
                \item[\it Name.]#1
               \item[\it Instance.]                #2
                \item[\it Output.]#3
                \end{list}}
\newtheorem{theorem}{Theorem}
\newtheorem{lemma}[theorem]{Lemma}
\theoremstyle{definition}
\newtheorem{definition}[theorem]{Definition}
\newtheorem{observation}[theorem]{Observation}
\newtheorem{remark}[theorem]{Remark}
\newtheorem*{thmmain}{Theorem \ref{thm:main}}
\newcommand{\Ex}{\mathop{\mathbb{{}E}}\nolimits} 
\newcommand{\ZI}{Z^\mathrm{Ising}_{G,\bbeta}}
\newcommand{\ZMonoI}{Z^\mathrm{Ising}_{G,\sbeta}}
\newcommand{\ZMonoIG}[1]{Z^\mathrm{Ising}_{#1,\sbeta}}
\newcommand{\ZIstGb}[4]{Z^{\text{ Ising}}_{#3,#4,#1,#2}}
\newcommand{\ZE}[1]{Z_{#1}(G;\blambda)}
\newcommand{\hatZS}[1]{\widehat Z_{#1}(G;\blambda,w)}
\newcommand{\hatZ}{\widehat Z(G;\blambda,w)}
\newcommand{\hatZp}[2]{\widehat Z_{#1}(#2)}
\newcommand{\Zrc}{Z^\mathrm{RC}_{G;\bp}}
\newcommand{\vdeg}{\mathop{\mathrm{deg}}}
\newcommand{\wt}{\mathrm{wt}}
\newcommand{\wtrc}{\wt^\mathrm{RC}_{G,\bp}}
\newcommand{\wtI}{\wt^\mathrm{Ising}_{G,\bbeta}}
\newcommand{\wtMonoI}{\wt^\mathrm{Ising}_{G,\sbeta}}
\newcommand{\piI}{\pi^\mathrm{Ising}_{G,\bbeta}}
\newcommand{\piIprime}{\pi^\mathrm{Ising}_{G',\bbeta}}
\newcommand{\piMonoI}[1]{\pi^\mathrm{Ising}_{#1,\sbeta}}
\newcommand{\piMonoIbeta}[2]{\pi^\mathrm{Ising}_{#1, #2}}
\newcommand{\pirc}{\pi^\mathrm{RC}_{G,\bp}}
\newcommand{\Omegabar}{\overline\Omega_\emptyset}
\newcommand{\Omegahat}{\widehat\Omega}
\newcommand{\cp}[2]{\gamma(#1,#2)}
\newcommand\cps{\mathrm{cp}}
\newcommand\enc{\eta_{T,T'}}
\newcommand{\calE}{\mathcal{E}}
\def\ourrange{\mathbb{Q}}
\def\ferrorange{\mathbb{Q}_{>1}}
\def\calW{\mathcal{W}} 
\def\acc{1.1}  
\def\bd{\tfrac{1}{2m}}
\def\KK{8}
\def\bbeta{ \beta}  
\def\sbeta{ b}  
\def\blambda{ \lambda}  
\def\bp{ p}  
\let\epsilon=\varepsilon
\def\calM{\mathcal{M}}
\def\Omegastar{\Omega^*}
\def\pistar{\pi^*}  
\newcommand\tmix[1]{t_{\mathrm{mix},#1}}
\def\acc{\epsilon}  
\def\goal{\hat{\sbeta}}
\def\myb{\sbeta}
\def\nmin{\nu_{\text{min}}}
\def\nmax{\nu_{\text{max}}}
\def\FerroIsingCorr{\ensuremath{\mathsf{FerroIsingCov}}}
\def\IsingCorr{\ensuremath{\mathsf{IsingCov}}}
\def\Correlation{\ensuremath{\mathsf{SignIsingCov}_{\myb}}}
\def\Pr{\mathop{\rm Pr}\nolimits}
\def\calD{\mathcal{D}}
\def\DSW{\calD_{G,\bp}}
\newcommand{\calF}{\mathcal F}
\let\rho=\varrho
\newcommand{\lambdamin}{\lambda_{\min}}
\newcommand{\lambdaimin}{\lambda^{[i]}_{\min}}
\def\hatR{\widehat{R}}
\def\hatQ{\widehat{Q}}
\title {Approximating Pairwise Correlations in the Ising Model\thanks{To Appear in ACM ToCT}}
\author{Leslie Ann Goldberg\thanks{University of Oxford,  UK. The research leading to these results has received funding from the European Research Council under the European Union's Seventh Framework Programme (FP7/2007--2013) ERC grant agreement no.\ 334828. The paper reflects only the authors' views and not the views of the ERC or the European Commission. The European Union is not liable for any use that may be made of the information contained therein.}
\and Mark Jerrum\thanks{Queen Mary, University of London, UK. Supported by EPSRC grant EP/N004221/1.}}
\date{25 April 2019}
\begin{document}
\maketitle{}

\begin{abstract}
In the Ising model, we consider the problem of estimating the covariance of the spins at two specified vertices.
In the ferromagnetic case, it is easy to obtain an additive approximation to this covariance by repeatedly sampling from
the relevant Gibbs distribution.  However, we desire a multiplicative approximation, and it is not clear how to achieve
this by sampling, given that the covariance can be exponentially small.
Our main contribution is a fully polynomial time randomised approximation scheme (FPRAS) for the covariance in the ferromagnetic case. 
We also show that that the restriction to the ferromagnetic case is essential --- there is no FPRAS for
multiplicatively estimating the covariance of an antiferromagnetic Ising model unless RP = \#P.
In fact, we show that even determining the sign of the covariance is \#P-hard in the antiferromagnetic case.
\end{abstract}

\section{Introduction}\label{sec:intro}

Let $G=(V,E)$ be a  
graph and let $\bbeta:E\to \ourrange$ be an edge weighting of~$G$.  
 A \emph{configuration} of the Ising model is an assignment $\sigma:V\to\{-1,+1\}$ of  \emph{spins}
 from $\{-1,+1\}$
  to the vertices of $G$. 
The weight of a configuration is 
$$\wtI(\sigma)= \prod_{\substack{e=\{u,v\}\in E:\\ \sigma(u)=\sigma(v)}}\bbeta(e).$$
The Ising partition function is
$\ZI=\sum_{\sigma:V\to\{-1,+1\}}\wtI(\sigma)$.
It is the normalising factor that makes the weights of configurations into a probability distribution, 
$\piI(\cdot)$,
which is called the \emph{Gibbs distribution} of the Ising model.
Thus, the probability of observing configuration~$\sigma$ is $\piI(\sigma)=\wtI(\sigma)/\ZI$.

We say that an edge weighting is   \emph{ferromagnetic} if 
$\bbeta(e)>1$ for all $e\in E$. The corresponding Ising model is also said to be ferromagnetic in this case.
We say that an edge weighting and the corresponding Ising model are
\emph{antiferromagnetic} if $0<\bbeta(e)<1$ for all $e\in E$.

Given specified vertices $s,t$, we are interested in computing $\Ex_{\piI}[\sigma(s)\sigma(t)]$.
Since $$\Ex_{\piI}[\sigma(s)]=\Ex_{\piI}[\sigma(t)]=0,$$ this quantity is equal to the covariance of the spins at $s$ and~$t$.

Interestingly, none of the existing work on computational aspects of the ferromagnetic Ising model provides an efficient algorithm for
estimating this covariance.
Jerrum and Sinclair~\cite{JS93} presented a polynomial-time algorithm for approximating the partition function $\ZI$ 
within specified relative error, 
and Randall and Wilson~\cite{RW99} observed that this algorithm could be used to produce samples from the Gibbs distribution.
Therefore, by repeated sampling we can easily get an additive approximation to the covariance.  Specifically, the covariance may be estimated to additive error $\epsilon$ using $O(\epsilon^{-2})$ samples.  

Our main contribution (Theorem~\ref{thm:main}) is a polynomial-time algorithm to approximate the covariance within small multiplicative error.
This is much more challenging than obtaining an additive approximation since the covariance may
be exponentially small in $n$, as will typically be the case when the system is in the uniqueness regime.
The computational problem that we study is the following.

\prob
{$\FerroIsingCorr$.} 
{ A   graph $G=(V,E)$  with 
specified vertices $s$ and $t$. An edge weighting $\bbeta:E\to \ferrorange$ of $G$.
 } 
{$ \Ex_{\piI}[\sigma(s)\sigma(t)]$.}

\newcommand{\statethmmain}{There is an FPRAS for $\FerroIsingCorr$.}

The reason that we restrict the range of the edge weighting~$\beta$
to the rationals 
(rather than allowing real-valued weights)
is to avoid the issue of how to represent real numbers in the input.
Each weight $\beta(e)$ satisfies $\beta(e)>1$.
For concreteness, we assume that  it is represented in the input by two
positive integers $P(e)$ and $Q(e)$ (specified in unary\footnote{The assumption that 
$P(e)$ and $Q(e)$ are specified in unary is a technical simplification, but is not essential:
see Remark~\ref{rem:unarybinary}.} in the input)
such that  
 and $\beta(e) = 1+P(e)/Q(e)$.
Our main result is that there is a polynomial-time approximation algorithm 
for $\FerroIsingCorr$.
In order to state the result precisely, we need to recall a definition from computational complexity. 
We view a problem, such as $\FerroIsingCorr$, as a function $f:\Sigma^\ast\to\ourrange$ from problem instances to rational numbers.

\begin{definition}
A \emph{randomised approximation scheme\/} for~$f:\Sigma^\ast\to\ourrange$ is a
randomised algorithm that takes as input an instance $ x\in
\Sigma^{\ast }$ (e.g., an encoding of a labelled graph) and an error
tolerance $\varepsilon >0$, and outputs a
number $z\in\ourrange$
(a random variable on the ``coin tosses'' made by the algorithm)
such that, for every instance~$x$,
$$
\Pr \bigg[e^{-\epsilon} \leq \frac{z}{f(x)} \leq
e^\epsilon \bigg]\geq \frac{3}{4}\,,
$$
where, by convention, $0/0 = 1$.
The randomised approximation scheme is said to be a
\emph{fully polynomial randomised approximation scheme},
or \emph{FPRAS},
if it runs in time bounded by a polynomial
in $ |x| $ and $ \epsilon^{-1} $.
(See Mitzenmacher and Upfal~\cite[Definition 10.2]{MU05}.)
\end{definition}
The slight modification of the more familiar definition is to ensure that functions $f$ taking 
negative values are dealt with correctly.

\begin{theorem}\label{thm:main} 
\statethmmain
\end{theorem}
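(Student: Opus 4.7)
My strategy is to express the covariance via the high-temperature (even-subgraph) representation of the Ising model as a ratio of two partition functions, and then to estimate that ratio by telescoping along a path from $s$ to $t$, sampling each factor with a rapidly-mixing Markov chain. Setting $J_e=\tfrac12\log\beta(e)$ and $W_X=\sum_{A\subseteq E:\,\partial A=X}\prod_{e\in A}\tanh J_e$, the standard subgraph expansion (equivalently, the $q=2$ random-cluster connection formula via Edwards--Sokal \cite{EdwardsSokal}) gives
$$\Ex_{\piI}[\sigma(s)\sigma(t)]=\frac{W_{\{s,t\}}}{W_\emptyset}.$$
This ratio can be as small as $\exp(-\Omega(n))$, so direct Monte Carlo fails. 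However, the Glauber dynamics on the even-subgraph world (or the worm dynamics of \cite{CGHT}) is rapidly mixing for ferromagnetic Ising, so ratios of $W$-values that differ ``locally'' should be tractable; the trick is to build $\{s,t\}$ out of small, locally-bounded pieces.

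To that end, fix any $s$--$t$ path $s=u_0,u_1,\dots,u_k=t$ in $G$ (assumed to exist; otherwise $s$ and $t$ lie in different components and $\Ex_{\piI}[\sigma(s)\sigma(t)]=0$ trivially) and telescope
$$\frac{W_{\{s,t\}}}{W_\emptyset}=\prod_{l=1}^{k}\frac{W_{\{s,u_l\}}}{W_{\{s,u_{l-1}\}}},\qquad W_{\{s,s\}}:=W_\emptyset.$$
Writing $e_l=\{u_{l-1},u_l\}$, the bijection $A\mapsto A\triangle\{e_l\}$ between subgraphs with boundaries $\{s,u_{l-1}\}$ and $\{s,u_l\}$ expresses each factor as a conditional expectation
$$\frac{W_{\{s,u_l\}}}{W_{\{s,u_{l-1}\}}}=\Ex\!\left[(\tanh J_{e_l})^{1-2\mathbf{1}[e_l\in A]}\,\middle|\,\partial A=\{s,u_{l-1}\}\right]$$
of a random variable lying in $[\tanh J_{e_l},\coth J_{e_l}]$. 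Because $\beta(e)=1+P(e)/Q(e)$ with $P(e),Q(e)$ given in unary, $\tanh J_{e_l}=P(e_l)/(2Q(e_l)+P(e_l))$ is polynomially bounded away from both $0$ and $1$, so each factor lies in a polynomially bounded interval. Each factor can then be estimated by running a Glauber-style chain on $\{A:\partial A=\{s,u_{l-1}\}\}$ (a translate of the even-subgraph chain, hence rapidly mixing by the Jerrum--Sinclair analysis \cite{JS93}) and empirically averaging the observable; by Chernoff, $\mathrm{poly}(n,\epsilon^{-1},k)$ samples per factor suffice for multiplicative precision $\exp(\pm\epsilon/k)$, so the product approximates $\Ex_{\piI}[\sigma(s)\sigma(t)]$ to within $\exp(\pm\epsilon)$.

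The main obstacle is not the algebraic setup but establishing rapid mixing of the fixed-boundary chain together with the concrete polynomial bounds that the Chernoff bookkeeping needs. Although the unconditioned worm chain of \cite{CGHT} is rapidly mixing, a polynomial-time simulation of the conditional (fixed-boundary) chain must be justified separately, most naturally via a canonical-paths or chain-comparison argument. Some additional care may be required for edges with $\beta(e)$ extremely close to $1$ (where $\tanh J_e$ is tiny), but since $P(e),Q(e)$ are presented in unary this only costs a polynomial blow-up. Once rapid mixing of the fixed-boundary chain is in hand, the remaining ingredients---the identity, the polynomial bound on the local ratios, and the product-of-estimates bookkeeping---are essentially standard.
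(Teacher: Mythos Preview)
Your telescoping identity and the observation that each factor
$Z_{\{s,u_l\}}/Z_{\{s,u_{l-1}\}}$ is a conditional expectation of a random variable bounded in
$[\lambda(e_l),\lambda(e_l)^{-1}]$ are both correct, and this is an attractive decomposition.
But the step you flag as ``the main obstacle'' is not a technicality: it is the whole problem,
and the sketch you give does not resolve it.

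Concretely, you need to sample from the conditional distribution on
$\Omega_S=\{A:\partial A=S\}$ for $S=\{s,u_{l-1}\}$.  This is \emph{not} ``a translate of the
even-subgraph chain'' in any sense that inherits rapid mixing from~\cite{JS93} or~\cite{CGHT}.
First, single-edge flips leave $\Omega_S$, so there is no Glauber chain on $\Omega_S$ with those
moves.  Second, if you translate $\Omega_S$ to $\Omega_\emptyset$ via $A\mapsto A\oplus A_0$ for some
fixed $A_0\in\Omega_S$, the induced edge weights become
$\tilde\lambda(e)=\lambda(e)^{\pm1}$ with $\tilde\lambda(e)>1$ on the edges of~$A_0$; this is outside
the ferromagnetic regime, and the key inequality $Z_{S'}\le Z_\emptyset$ underpinning the
canonical-path congestion bound fails (indeed $\tilde Z_{S'}/\tilde Z_\emptyset=Z_{S'\oplus S}/Z_S$,
which can be arbitrarily large).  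Third, rejection sampling from the unweighted worm process
does not help either: while $\pi(\Omega_\emptyset)\ge\Omega(1/n^2)$, the stationary mass
$\pi(\Omega_S)=Z_S/Z(\Omega)$ can be exponentially small for the very pairs~$S$ you need --- that
is precisely the phenomenon you are trying to overcome.

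The paper circumvents this by never attempting to sample from a single $\Omega_S$.  Instead it
runs a \emph{weighted} worm process on all of $\Omega=\bigcup_{|S|\le2}\Omega_S$, with subset
weights $w_S\approx Z_\emptyset/Z_S$ chosen so that every $\Omega_S$ has comparable stationary
mass; the ratio $Z_{\{s,t\}}/Z_\emptyset$ is then read off directly.  Proving rapid mixing of this
weighted chain requires a new inequality (Lemma~\ref{lem:weightcompare}, proved via FKG in the
random-cluster representation), and the correct weights are themselves unknown a priori and must
be learned by simulated annealing on~$\lambda$.  Your telescoping would avoid the annealing, but
only if you had a sampler for $\Omega_S$; obtaining one seems to require machinery at least as
heavy as the paper's.
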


 The restriction to the ferromagnetic case in Theorem~\ref{thm:main} is crucial.
Consider the unrestricted version of the problem.
 \prob
{$\IsingCorr$.} 
{ A   graph $G=(V,E)$  with 
specified vertices $s$ and $t$. An edge weighting $\bbeta:E\to \ourrange_{>0}$ of~$G$.
 } 
{$ \Ex_{\piI}[\sigma(s)\sigma(t)]$.}

We show the following.
\begin{theorem}\label{thm:negthm}
There is no FPRAS for $\IsingCorr$ unless $\mathrm{RP}=\mathrm{\#P}$.
\end{theorem}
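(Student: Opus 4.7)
The plan is to show that an FPRAS for $\IsingCorr$ collapses $\mathrm{RP}$ and $\mathrm{\#P}$ via the sign of the covariance in the antiferromagnetic case. First, I would apply the FPRAS at a constant error tolerance, say $\epsilon = 1/2$. The FPRAS guarantee
\[\Pr\bigl[e^{-\epsilon} \le z/f(x) \le e^{\epsilon}\bigr] \ge 3/4,\]
together with the convention $0/0 = 1$, forces the returned value $z$ to share a sign with $f(x) = \Ex_{\piI}[\sigma(s)\sigma(t)]$, and to equal $0$ when the covariance vanishes, with probability at least $3/4$. So an FPRAS yields an $\mathrm{RP}$ algorithm for the decision problem $\Correlation$ of determining the sign of the covariance for antiferromagnetic Ising instances, and the theorem reduces to proving $\mathrm{\#P}$-hardness of $\Correlation$.

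For the second step, write $\Ex_{\piI}[\sigma(s)\sigma(t)] = (Z^=_G - Z^{\ne}_G)/\ZI$, where $Z^=_G$ (respectively $Z^{\ne}_G$) sums the Ising weights over configurations with $\sigma(s) = \sigma(t)$ (respectively $\sigma(s) \ne \sigma(t)$). The sign of the covariance equals the sign of $Z^=_G - Z^{\ne}_G$; and each of $Z^=_G$, $Z^{\ne}_G$ is itself an antiferromagnetic Ising partition function on a polynomially modified graph ($Z^=_G$ by identifying $s$ with $t$; $Z^{\ne}_G$ by interposing a small antiferromagnetic spin-flip gadget between $s$ and $t$). The sign problem thus reduces to comparing two polynomial-time constructible antiferromagnetic Ising partition functions. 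Exact evaluation of such partition functions is $\mathrm{\#P}$-hard by Vertigan and Welsh~\cite{VertiganWelsh}; the strategy is to reduce from an arbitrary hard instance $H$ by building, for each rational threshold $N$, an augmented antiferromagnetic instance $G_N$ with distinguished vertices $s, t$ such that
\[\mathrm{sign}\bigl(Z^=_{G_N} - Z^{\ne}_{G_N}\bigr) = \mathrm{sign}\bigl(Z^{\mathrm{Ising}}_{H,\bbeta} - N\bigr).\]
Since $Z^{\mathrm{Ising}}_{H,\bbeta}$ is a rational of polynomial bit-length, binary search over polynomially many thresholds $N$, using an $\mathrm{RP}$ sign oracle, would recover it exactly, yielding $\mathrm{RP} = \mathrm{\#P}$.

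The main obstacle is the construction of the threshold gadget realising the displayed sign identity. It must be a polynomial-size antiferromagnetic gadget with rational weights in $(0,1)$ of polynomial (unary) encoding, and as $N$ varies it must sweep its effective threshold across the full range of possible values of $Z^{\mathrm{Ising}}_{H,\bbeta}$. Standard gadgetry --- bundles of parallel antiferromagnetic edges joining $s$ and $t$, whose combined weight $\bbeta^k$ is geometric in $k$, together with normalising pendant structures on $H$ so that the contribution of $H$ factorises cleanly --- suggests a natural line of attack, but establishing a clean identity of the form $Z^=_{G_N} - Z^{\ne}_{G_N} = c\,(Z^{\mathrm{Ising}}_{H,\bbeta} - N)$ with a known positive constant $c$ is the delicate step, and may ultimately require a more specific choice of source problem, for instance a Tutte-plane evaluation with built-in sign structure in the style of~\cite{sign}.
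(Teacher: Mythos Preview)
Your first step --- using the FPRAS at constant $\epsilon$ to read off the sign of the covariance with bounded error probability, and thus obtain a randomised polynomial-time algorithm for $\Correlation$ --- is correct and is exactly how the paper reduces Theorem~\ref{thm:negthm} to Theorem~\ref{thm:anti}.

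The gap is in your second step. You correctly identify the strategy (binary search using a sign oracle, in the style of~\cite{sign}) but do not supply the key construction, and the specific gadget you sketch is unlikely to work as stated. If $s$ and $t$ live in a gadget disjoint from $H$, then $Z^=_{G_N}-Z^{\ne}_{G_N}$ factors as $Z_H$ times a constant depending only on the gadget, and its sign carries no information about $Z_H$ relative to any threshold~$N$. Getting a genuine comparison requires $s,t$ to interact with $H$, and as you yourself note, engineering a clean identity of the form $Z^=_{G_N}-Z^{\ne}_{G_N}=c\,(Z_H-N)$ is exactly the hard part you have not done.

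The paper avoids this obstacle by not comparing $Z_H$ to thresholds directly. Instead it uses the telescoping decomposition $\ZMonoIG{G}=2^n\prod_{j=1}^m\alpha_j$, where $G_j$ is $G$ restricted to its first $j$ edges, $e_j=\{s_j,t_j\}$, and $\alpha_j=(\myb+\nu_j)/(1+\nu_j)$ with $\nu_j=\ZIstGb{s_j+}{t_j-}{G_{j-1}}{\sbeta}/\ZIstGb{s_j+}{t_j+}{G_{j-1}}{\sbeta}$. It then binary-searches on each $\nu_j$ separately. The crucial point is that if one replaces the edge $e_j$ in $G_j$ by a gadget $J$ that $\myb$-implements an effective edge weight $\goal$, then the covariance at $(s_j,t_j)$ in the resulting graph has sign equal to $\mathrm{sign}(\goal-\nu_j)$. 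So the threshold comparison is against a \emph{ratio} of pinned partition functions, not against the partition function itself, and the comparison falls out for free from the structure of the Ising weight on a single edge. The remaining work (Lemma~\ref{lem:J}) is to show that, for any fixed $\myb\in(0,1)$, one can $\myb$-implement any target $\goal\in[\myb^n,\myb^{-n}]$ to exponential accuracy using a polynomial-size gadget built from paths of varying lengths; this is what lets the whole reduction go through at a single fixed antiferromagnetic weight~$\myb$, which your parallel-edge idea would not achieve.
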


Theorem~\ref{thm:negthm} holds even in the restricted setting 
where, for some fixed $b\in(0,1)$, the edge weighting~$\beta$ is the constant function which assigns every edge weight $\beta(e)=b$.
Theorem~\ref{thm:anti}
in Section~\ref{sec:antiferro}
shows that even showing whether 
$ \Ex_{\piI}[\sigma(s)\sigma(t)]$ 
is at least~$0$ or at most~$0$
is \#P-hard, in this restricted setting.
Theorem~\ref{thm:negthm} is an immediate consequence of Theorem~\ref{thm:anti}.

In Section~\ref{sec:proof} we prove
Theorem~\ref{thm:main}  
by providing an FPRAS for $\FerroIsingCorr$. Our FPRAS  is based on Markov-chain simulation.
Like the known MCMC algorithms for approximating the partition function of the Ising model,
it is explained in terms of a related model called the even subgraphs model. 
Our Markov chain  is a modification of a process known as the worm process.

\section{The even subgraphs model and the worm process}\label{sec:even}

An instance of the even subgraphs model is a graph $G=(V,E)$
with an edge weighting  $\blambda:E\to\ourrange_{>0}$.
A configuration of the model is a subset $A\subseteq E$
such that every vertex in the subgraph $(V,A)$ has even degree.

\begin{definition}\label{def:lamA}
We use the notation $\blambda(A)$ to denote the product $\blambda(A) = \prod_{e\in A} \blambda(e)$
of edge-weights of the edges in~$A$.
\end{definition}

It is convenient to generalise the even subgraphs model to allow a small set $S\subseteq V$ of ``exceptional vertices'' of odd degree.   
The configuration space of the (extended) even subgraphs model is given by
$$
\Omega_S=\big\{A\subseteq E :\text{ $\vdeg(v)$ is odd in $(V,A)$ iff $v\in S$}\big\},
$$
and the corresponding partition function is given by 
$$
\ZE{S}=\sum_{A\in\Omega_S}\blambda(A).
$$
Despite appearances, there is a close connection between the Ising model and the even subgraphs model.
Suppose that, for every $e\in E$,
$\blambda(e)=(\bbeta(e)-1)/(\bbeta(e)+1)$. Van der Waerden~\cite{vdW} showed that
there is an easily-computable scaling factor~$C$ such that
$\ZI=C\,\ZE\emptyset$.  
Note that a ferromagnetic Ising model corresponds to an even-subgraphs model
in which   $0<\blambda(e)<1$ for all $e\in E$.
We do not use  precisely van der Waerden's identity, but we do use a closely related one
which is captured by the following lemma, which can be found, e.g., in \cite[Lemma 2.1]{CGHT}.

\begin{lemma}\label{lem:worm} 
Let $G=(V,E)$ be a graph with edge weighting $\bbeta$.
Let $\blambda$ be the edge weighting of $G$ defined by
 $\blambda(e)=(\bbeta(e)-1)/(\bbeta(e)+1)$.  
Then,
for any set $S\subseteq V$,
\begin{equation}\label{eq:worm}
\Ex_{\piI}\left[\,\prod_{v\in S}\sigma(v)\right]=\frac{\ZE{S}}{\ZE\emptyset}.
\end{equation} 
\end{lemma}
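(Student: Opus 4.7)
The plan is to carry out the standard high-temperature expansion of the Ising model and identify the resulting sum as the even-subgraphs partition function. The starting point is the linearisation
$$\bbeta(e)^{[\sigma(u)=\sigma(v)]} = \tfrac{1}{2}(\bbeta(e)+1)\bigl(1+\blambda(e)\,\sigma(u)\sigma(v)\bigr),$$
which is immediate on checking the two cases $\sigma(u)\sigma(v)=\pm 1$: when $\sigma(u)=\sigma(v)$ the right-hand side collapses to $\bbeta(e)$, and otherwise to $1$.

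Writing $C=\prod_{e\in E}\tfrac{1}{2}(\bbeta(e)+1)$ and expanding the product over edges as a sum over subsets $A\subseteq E$ (each edge contributes either the $1$ or the $\blambda(e)\sigma(u)\sigma(v)$ term), I obtain
$$\wtI(\sigma) = C\prod_{e=\{u,v\}\in E}\bigl(1+\blambda(e)\,\sigma(u)\sigma(v)\bigr) = C\sum_{A\subseteq E}\blambda(A)\prod_{v\in V}\sigma(v)^{d_A(v)},$$
where $d_A(v)$ denotes the degree of $v$ in the subgraph $(V,A)$. Next I multiply through by $\prod_{v\in S}\sigma(v)$, sum over all spin assignments, and swap the sums. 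The inner sum over $\sigma$ for fixed $A$ factorises across vertices, and the one-variable sum $\sum_{\sigma(v)\in\{\pm 1\}}\sigma(v)^k$ equals $2$ if $k$ is even and $0$ if $k$ is odd. Hence
$$\sum_{\sigma:V\to\{\pm 1\}}\prod_{v\in V}\sigma(v)^{d_A(v)+\mathbf{1}[v\in S]} = 2^{|V|}\cdot\mathbf{1}\bigl[A\in\Omega_S\bigr],$$
since the combined parity condition "$d_A(v)+\mathbf{1}[v\in S]$ is even at every $v$" is precisely the defining property of $\Omega_S$. Collecting terms yields $\sum_\sigma \wtI(\sigma)\prod_{v\in S}\sigma(v) = C\cdot 2^{|V|}\cdot \ZE{S}$. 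Specialising to $S=\emptyset$ gives $\ZI = C\cdot 2^{|V|}\cdot \ZE{\emptyset}$, and dividing the two identities proves the lemma.

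There is no genuine obstacle beyond bookkeeping; the proof is essentially forced once the linearisation is in hand. The only points that require a moment's care are the parity matching described above and the realisation that the explicit form of the scaling factor $C\cdot 2^{|V|}$ (which is morally van der Waerden's constant) is irrelevant because it cancels in the ratio. This cancellation is why the identity of the lemma is cleaner than the literal van der Waerden relation $\ZI = C\,\ZE{\emptyset}$ mentioned immediately before it, and it is the reason the lemma can serve as the bridge to the worm-process analysis in the rest of the paper.
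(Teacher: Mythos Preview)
Your proof is correct and follows essentially the same route as the paper: the same edge-by-edge linearisation, the same expansion over subsets $A\subseteq E$, the same parity argument to isolate $\Omega_S$, and the same cancellation of the scaling factor upon dividing. The only differences are cosmetic---you track the degree parity explicitly via $d_A(v)$ whereas the paper phrases it verbally---so there is nothing substantive to add.
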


\begin{proof}
Observe that 
$$
\wtI(\sigma)=\prod_{e=\{u,v\}\in E}\frac{\bbeta(e)+1}2\left[1+\frac{\bbeta(e)-1}{\bbeta(e)+1}\,\sigma(u)\sigma(v)\right].
$$
since the factor corresponding to $e=\{u,v\}$ contributes $\bbeta(e)$ if $\sigma(u)=\sigma(v)$ and contributes~1 otherwise.  Thus,
setting $\blambda$ as in the statement of the lemma, 
\begin{align}
\ZI=\sum_{\sigma}\wtI(\sigma) &=\prod_{e\in E}\frac{\bbeta(e)+1}2\,\sum_\sigma\prod_{e=\{u,v\}\in E}
  \big[1+\blambda(e)\sigma(u)\sigma(v)\big]\notag\\
&=\prod_{e\in E}\frac{\bbeta(e)+1}2\,\sum_\sigma\sum_{A\subseteq E}\,\prod_{e=\{u,v\}\in A}\blambda(e)\sigma(u)\sigma(v)\notag\\
&=2^n\prod_{e\in E}\frac{\bbeta(e)+1}2 
\sum_{A\in\Omega_\emptyset}\,\prod_{e\in A}\blambda(e)\notag\\
&=c\,\ZE\emptyset,\label{eq:vdW}
\end{align}
where $c=2^n\prod_{e\in E}[(\bbeta(e)+1)/2]$, and $\sigma$ ranges over configurations $V\to\{-1,+1\}$.  
The third equality is explained as follows.  If $(V,A)$ contains an odd degree vertex $u$, 
then $\sigma(u)$ appears to an odd power in the term corresponding to~$A$;  
the term is then annihilated by the summation over~$\sigma$.

Arguing similarly,
\begin{align}
\sum_\sigma\wtI(\sigma)\prod_{w\in S}\sigma(w)&=\prod_{e\in E}\frac{\bbeta(e)+1}2\,\sum_\sigma\prod_{e=\{u,v\}\in E}
  \big[1+\blambda(e)\sigma(u)\sigma(v)\big]\prod_{w\in S}\sigma(w)\notag\\
&=\prod_{e\in E}\frac{\bbeta(e)+1}2\,\sum_\sigma\sum_{A\subseteq E}\,
  \prod_{e=\{u,v\}\in A}\blambda(e)\sigma(u)\sigma(v)\prod_{w\in S}\sigma(w)\notag\\
&=2^n\prod_{e\in E}\frac{\bbeta(e)+1}2 
\sum_{A\in\Omega_S}\,\prod_{e\in A}\blambda(e)\notag\\
&=c\,\ZE{S}\label{eq:vdWS}.
\end{align}
The identity in the statement of the lemma is obtained by dividing \eqref{eq:vdWS} by \eqref{eq:vdW}.
\end{proof}

We remark that the interesting case of the lemma is when $|S|$ is even.
If $|S|$ is odd, then both sides of identity~\eqref{eq:worm} are zero.  
Lemma~\ref{lem:worm} provides a way to approximate the correlation $\Ex[\sigma(s)\sigma(t)]$ 
in the Ising model
by 
estimating the ratio of two partition functions in the even-subgraphs model.
At first sight it might seem that existing Markov chain Monte Carlo approaches might be up to this task.    
One such Markov chain is the so-called ``worm process''. The state space of this chain is defined as follows.

\begin{definition}\label{def:Omega}
Let $\Omega=\bigcup_{S\subseteq V:|S|\leq 2}\Omega_S=\bigcup_{S\subseteq V:|S|\in\{0,2\}}\Omega_S$.  
\end{definition}

The ``worm process'' is  a Markov chain on $\Omega$ whose stationary distribution assigns probability proportional 
to $\blambda(A)=\prod_{e\in A}\blambda(e)$ to each configuration $A\in\Omega$.  
A~transition of the worm process simply flips a single edge of the 
graph from 
being in the configuration~$A$  to being out of~$A$   or vice versa.  
 Thus, as transitions   occur,
the two odd degree vertices move in random paths along the edges of~$G$, occasionally becoming adjacent and disappearing.
   
The worm process is rapidly mixing, as was shown by Collevecchio, Garoni, Hyndman and Tokarev~\cite[Theorem 1.3]{CGHT}.
In principle, to estimate the ratio appearing in the right-hand side
of equation~\eqref{eq:worm}
with  $S=\{s,t\}$, we could just 
run the worm process and
observe the relative time that the process spends in states in $\Omega_{\{s,t\}}$
compared with the time that it spends in states in $\Omega_\emptyset$,
However, if the spins at $s$ and $t$ are only weakly correlated, then the ratio $\ZE{\{s,t\}}/\ZE\emptyset$ will
be small, and the process will spend a small (possibly exponentially small) proportion of time in $\Omega_{\{s,t\}}$.

Following~\cite{JSV},
we modify the worm process by artificially weighting
configurations so that  each subset in the partition $\{\Omega_S:|S|\leq2\}$ of $\Omega$ has roughly equal weight in the 
stationary distribution.   
We will give the details of the modified process in Section~\ref{sec:weightedworm}.
First, we need to 
define the Random Cluster model~\cite{GrimmettBook} 
(which, in the special case we consider, is also equivalent to the Ising model) and 
use the Random Cluster model to prove a lemma (Lemma~\ref{lem:weightcompare} below),
 which will help with the analysis of the weighted worm process.
 
An instance of the Random cluster model is
a graph $G=(V,E)$ with an edge weighting  $\bp:E\to\ourrange\cap(0,1)$.  
A configuration of this model is a subset $A\subseteq E$.
The weight of configuration~$A$ is 
$$\wtrc(A)=\prod_{e\in A}\bp(e)\prod_{e\in E\setminus A}(1-\bp(e))\,2^{\kappa(A)},$$ 
where $\kappa(A)$ is the number of connected components in the graph $(V,A)$.
There is an associated partition function $\Zrc=\sum_{A\subseteq E}\wtrc(A)$, 
but we are more concerned with the probability distribution on configurations given by 
$\pirc=\wtrc(A)/\Zrc$ for all $A\subseteq E$.  
Following Fortuin and Kasteleyn~\cite{FK72}, Edwards and Sokal~\cite{EdwardsSokal}
showed that there is a simple coupling between the distributions $\piI$ and $\pirc$ given by the 
following trial.

\begin{definition} (Edwards-Sokal Distribution)\label{def:SW}
Given a graph $G=(V,E)$ with an edge weighting 
$\bp:E\to\ourrange\cap(0,1)$, let $\calD_{G,\bp}$ be the following distribution on pairs $(A,\sigma)$.
\begin{enumerate}
\item Select $A\subseteq E$ according to the distribution $\pirc$.
\item Independently and uniformly, for each connected component  of $(V,A)$, choose a 
spin   from $\{-1,+1\}$ and assign that spin to all vertices in the  connected component.
Let $\sigma:V\to\{-1,+1\}$ be the resulting 
   spin configuration.
\end{enumerate}
\end{definition}

The following lemma shows that
the output of the Edwards-Sokal coupling is a sample from~$\piI$.

\begin{lemma}(Edwards and Sokal \cite{EdwardsSokal})\label{lem:SW}
Let $G=(V,E)$ be a  graph with edge weighting  $\bbeta:E\to \ourrange_{>1}$.
Let $\bp$ be the edge weighting of $G$ defined by
$\bp(e)=1-1/\bbeta(e)$.  
Let $(A,\sigma)$ be drawn from the Edwards-Sokal distribution $\calD_{G,\bp}$.
Then the distribution of~$\sigma$ is~$\piI$.
\end{lemma}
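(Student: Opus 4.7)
The plan is to write down the joint density of $(A,\sigma)$ under $\calD_{G,\bp}$ explicitly, sum over~$A$, and identify the result with $\wtI(\sigma)/\ZI$.

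Following Definition~\ref{def:SW}, for any pair $(A,\sigma)$ in which $\sigma$ is constant on every connected component of $(V,A)$, the two-step procedure gives
$$\Pr[(A,\sigma)] \;=\; \frac{\wtrc(A)}{\Zrc}\cdot 2^{-\kappa(A)} \;=\; \frac{1}{\Zrc}\prod_{e\in A}\bp(e)\prod_{e\in E\setminus A}\bigl(1-\bp(e)\bigr),$$
because the factor $2^{\kappa(A)}$ that appears in $\wtrc(A)$ cancels against the probability $2^{-\kappa(A)}$ of selecting one particular component-constant spin assignment. For any $\sigma$ that is \emph{not} constant on some component the joint probability is zero. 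The key feature to note is that this cancellation makes the joint density \emph{independent of $\kappa(A)$}, so the complicated connectivity factor disappears and the remaining expression factorises over edges.

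To compute the marginal on~$\sigma$, I fix $\sigma$ and sum over all $A\subseteq E$ that are consistent with $\sigma$, meaning that no edge $\{u,v\}\in A$ has $\sigma(u)\neq\sigma(v)$. Because the edge-decisions are independent, the sum factorises: every monochromatic edge $e=\{u,v\}$ (with $\sigma(u)=\sigma(v)$) may be freely in or out and contributes $\bp(e)+(1-\bp(e))=1$, whereas every bichromatic edge is forced to lie outside~$A$ and contributes $1-\bp(e)=1/\bbeta(e)$, using the definition $\bp(e)=1-1/\bbeta(e)$. Hence
$$\Pr[\sigma] \;=\; \frac{1}{\Zrc}\prod_{\substack{e=\{u,v\}\in E:\\ \sigma(u)\neq\sigma(v)}}\frac{1}{\bbeta(e)}.$$

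Multiplying top and bottom by $\prod_{e\in E}\bbeta(e)$ converts the product over bichromatic edges into a product over monochromatic edges in the numerator, which is precisely $\wtI(\sigma)$. The remaining $\sigma$-independent prefactor, combined with $1/\Zrc$, must equal $1/\ZI$ because $\Pr[\,\cdot\,]$ is already a probability distribution. Therefore the marginal is exactly $\piI$. No step poses a real obstacle; the only subtlety is the cancellation of $2^{\kappa(A)}$, which is exactly what allows the edge-wise factorisation to go through.
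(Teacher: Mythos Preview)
Your proof is correct; it is the standard Edwards--Sokal argument. The paper does not actually prove this lemma but merely states it with a citation to~\cite{EdwardsSokal}, so there is nothing to compare against.
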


We say that an event $\calE\subseteq2^E$ is monotonically increasing 
if, for all $A\subset A'\subseteq E$, we have $A\in \calE$ implies $A'\in \calE$.
In the random cluster model as defined here, monotonically increasing events are positively correlated.

\begin{lemma}\label{lem:FKG}
Suppose that events $\calE_1,\calE_2\subseteq 2^E$ are monotonically increasing. 
Then 
$$\Pr_{\pirc}(\calE_1\wedge\calE_2)\geq \Pr_{\pirc}(\calE_1)\Pr_{\pirc}(\calE_2).$$
\end{lemma}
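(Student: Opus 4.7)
The plan is to derive Lemma~\ref{lem:FKG} from the classical FKG theorem on a finite distributive lattice, applied to the Boolean lattice $2^E$ with meet $\wedge=\cap$ and join $\vee=\cup$. The FKG theorem says that a strictly positive measure $\mu$ on $2^E$ satisfying the lattice condition
\[
\mu(A\cup B)\,\mu(A\cap B)\ge \mu(A)\,\mu(B)\qquad\text{for all }A,B\subseteq E
\]
positively correlates monotonically increasing events. Since $\bp(e)\in(0,1)$ makes $\wtrc$ strictly positive, it will suffice to verify this lattice condition for $\wtrc$ (equivalently for $\pirc$, since the normalising constant cancels).

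I would split the weight as $\wtrc(A)=\mu_{\mathrm{prod}}(A)\cdot 2^{\kappa(A)}$, where $\mu_{\mathrm{prod}}(A)=\prod_{e\in A}\bp(e)\prod_{e\notin A}(1-\bp(e))$. The product measure satisfies the lattice condition with equality, because for each edge $e$ one has $\mathbf{1}[e\in A]+\mathbf{1}[e\in B]=\mathbf{1}[e\in A\cup B]+\mathbf{1}[e\in A\cap B]$, so the contribution of $e$ to both sides agrees. Thus the entire task reduces to showing that $\kappa$ is supermodular, namely
\[
\kappa(A\cup B)+\kappa(A\cap B)\ge \kappa(A)+\kappa(B).
\]

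To prove this supermodularity, I would order the edges of $B\setminus A$ as $e_1,\dots,e_k$ and add them one by one, starting either from $A\cap B$ (yielding the sequence ending at $B$) or from $A$ (yielding the sequence ending at $A\cup B$). Adding an edge decreases the component count by $1$ if its endpoints lie in different components of the current graph, and by $0$ otherwise. Since at every stage the graph built up from $A\cap B$ is an edge-subset of the graph built up from $A$, any two vertices that are separated in the $A$-graph are also separated in the $(A\cap B)$-graph. Consequently, each merging step that occurs on the $A$-side also occurs on the $(A\cap B)$-side, and therefore $\kappa(A\cap B)-\kappa(B)\ge \kappa(A)-\kappa(A\cup B)$, which rearranges to the desired inequality.

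The real content of the lemma is this supermodularity of $\kappa$, and it is also the place where the argument is delicate: it is precisely this inequality that forces the cluster weight $q=2$ (or more generally $q\ge 1$) into the hypotheses of standard FKG results for the random cluster model. For $q<1$ the inequality would reverse and the lattice condition would fail. Once supermodularity is in hand, the conclusion follows immediately from the FKG theorem (cf.~\cite[Ch.~2]{GrimmettBook}), applied to the two indicator functions of $\calE_1$ and $\calE_2$.
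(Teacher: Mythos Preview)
Your proposal is correct and follows essentially the same route as the paper: both reduce the claim to the FKG lattice condition for $\wtrc$, observe that the edge-weight factors cancel between the two sides, and arrive at the supermodularity inequality $\kappa(A\cup B)+\kappa(A\cap B)\ge \kappa(A)+\kappa(B)$. The only difference is that the paper defers the proof of this last inequality to Grimmett's book, whereas you supply a clean self-contained argument via the edge-by-edge coupling; your version is slightly more complete in that respect.
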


\begin{proof}
This inequality is stated as Part~(b) of Theorem~(3.8) of \cite{GrimmettBook},
for the situation where $\bp(e)$ is the same for all edges~$e$.  However the proof  
is essentially the same when $\bp(e)$ varies with~$e$.
The main step, in order to apply the FKG inequality, is to prove the well-known fact that the 
distribution~$\pirc$ satisfies the FKG lattice condition,  
which says that, for any sets $A_1,A_2 \subseteq E$,
$$\Pr_{\pirc}(A_1 \cup A_2) \Pr_{\pirc}(A_1 \cap A_2) \geq \Pr_{\pirc}(A_1) \Pr_{\pirc}(A_2).$$
To see this, recall the definition of $\pirc$. The denominators cancel, so the FKG lattice condition 
is equivalent to 
$$\wtrc(A_1 \cup A_2) \wtrc(A_1 \cap A_2) \geq \wtrc(A_1)  \wtrc(A_2).$$
Recalling the definition of $\wtrc$, note that, for any edge~$e$, the
quantities $\bp(e)$ and $1-\bp(e)$ occur the same number of times on the left-hand-side and right-hand-side.
Thus, the FKG lattice condition 
is equivalent to
$ 2^{\kappa(A_1 \cup A_2)} 2^{\kappa(A_1 \cap A_2)} \geq 2^{\kappa(A_1)}  2^{\kappa(A_2)}$.
The proof in~\cite{GrimmettBook} now applies without any further changes. 
\end{proof}
 
The following lemma will be used in the analysis of the weighted worm process.

\begin{lemma}\label{lem:weightcompare}
Let $G=(V,E)$ be a graph with edge weighting $\blambda:E \to \ourrange\cap (0,1)$.
Suppose $S,S'\subseteq V$ are subsets of $V$ of even cardinality, 
and assume that it is not the case that $\emptyset\subset S'\subset S$. 
Then 
$$
\frac{\ZE\emptyset}{\ZE{S}}
\leq 
\frac{\ZE\emptyset}{\ZE{S'}}
\times
\frac{\ZE\emptyset}{\ZE{S\oplus S'}}.
$$
\end{lemma}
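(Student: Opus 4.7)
The plan is to convert the inequality into a statement about Ising expectations and then into a statement about random-cluster connectivity. Define $\sigma_T := \prod_{v\in T}\sigma(v)$. Choose $\bbeta(e) = (1+\blambda(e))/(1-\blambda(e))$ so that $\blambda(e) = (\bbeta(e)-1)/(\bbeta(e)+1)$ and, since $\blambda(e)\in(0,1)$, we have $\bbeta(e)>1$ (so the associated Ising model is ferromagnetic). Applying Lemma~\ref{lem:worm} to each of $\emptyset$, $S$, $S'$, and $S\oplus S'$, and clearing common factors of $\ZE{\emptyset}$, the target inequality is equivalent to
\begin{equation*}
\Ex_{\piI}\!\left[\sigma_{S'}\right]\cdot \Ex_{\piI}\!\left[\sigma_{S\oplus S'}\right] \leq \Ex_{\piI}\!\left[\sigma_{S}\right].
\end{equation*}
This is natural because of the algebraic identity $\sigma_{S'}\cdot \sigma_{S\oplus S'} = \sigma_S$, which uses $\sigma(v)^2=1$ together with $S = S'\oplus(S\oplus S')$.

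Next, I apply the Edwards--Sokal coupling (Lemma~\ref{lem:SW}) with $\bp(e) = 1 - 1/\bbeta(e) = 2\blambda(e)/(1+\blambda(e))\in(0,1)$. Conditioning on the edge-set $A$ sampled from $\pirc$, each connected component $C$ of $(V,A)$ receives an independent uniform spin $\tau_C\in\{-1,+1\}$, so $\prod_{v\in T}\sigma(v)$ factors over components as $\prod_C \tau_C^{|C\cap T|}$; integrating out the spins, this contributes $1$ if every $|C\cap T|$ is even and $0$ otherwise. Hence
\begin{equation*}
\Ex_{\piI}[\sigma_T] = \Pr_{\pirc}[\calE_T],
\end{equation*}
where $\calE_T$ is the event that every connected component of $(V,A)$ contains an even number of vertices from $T$. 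Since $|C\cap S|\equiv |C\cap S'|+|C\cap(S\oplus S')|\pmod 2$ for every component $C$, I get the deterministic inclusion $\calE_{S'}\cap\calE_{S\oplus S'}\subseteq\calE_S$.

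The core step, and the one I expect to be the main obstacle, is to show that $\calE_T$ is a monotonically increasing event in $A$, so that Lemma~\ref{lem:FKG} applies. I would establish the combinatorial characterization that, for $|T|$ even, $\calE_T$ holds if and only if there exists a perfect matching $M$ of $T$ such that the two endpoints of every pair in $M$ lie in a common connected component of $(V,A)$. The \emph{if} direction is immediate: each component contains an even number of paired $T$-vertices and no unpaired ones. The \emph{only if} direction holds because, within any component, an even set of $T$-vertices can always be paired up arbitrarily. Since the connectivity event $\{u\leftrightarrow v\}$ is monotonically increasing in $A$, each $\bigcap_{\{u,v\}\in M}\{u\leftrightarrow v\}$ is monotone, and $\calE_T = \bigcup_M \bigcap_{\{u,v\}\in M}\{u\leftrightarrow v\}$ is a union of increasing events, hence still increasing.

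With the monotonicity of $\calE_{S'}$ and $\calE_{S\oplus S'}$ in hand, Lemma~\ref{lem:FKG} yields $\Pr_{\pirc}[\calE_{S'}\cap\calE_{S\oplus S'}] \geq \Pr_{\pirc}[\calE_{S'}]\Pr_{\pirc}[\calE_{S\oplus S'}]$, and combining this with the inclusion $\calE_{S'}\cap\calE_{S\oplus S'}\subseteq\calE_S$ gives $\Pr_{\pirc}[\calE_S]\geq \Pr_{\pirc}[\calE_{S'}]\Pr_{\pirc}[\calE_{S\oplus S'}]$, which is exactly what we needed. I note that this FKG-based argument does not appear to use the hypothesis that $\emptyset\subset S'\subset S$ fails; presumably that hypothesis is retained only because it is all the downstream application requires.
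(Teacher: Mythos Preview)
Your proof is correct and follows the same overall route as the paper: reduce to an Ising correlation inequality via Lemma~\ref{lem:worm}, pass to the random-cluster model through the Edwards--Sokal coupling, and apply FKG (Lemma~\ref{lem:FKG}). The substantive difference is the random-cluster event you attach to $\Ex_{\piI}\big[\prod_{v\in T}\sigma(v)\big]$. The paper identifies this expectation with $\Pr_{\pirc}(\text{$T$ lies in a single component})$, whereas you identify it with the parity event $\calE_T$ that every component of $(V,A)$ meets $T$ in an even number of vertices. Your identification is the correct one in full generality: the paper's claim that $\Ex[Y_T\mid T\text{ not connected}]=0$ fails once $|T|\geq4$ (e.g.\ when $T$ splits as two pairs in two components, giving $Y_T=1$), so the paper's argument as written is only valid when each of $S,S',S\oplus S'$ has size at most~$2$. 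Your matching characterisation cleanly shows $\calE_T$ is increasing, FKG then gives $\Pr(\calE_{S'}\cap\calE_{S\oplus S'})\geq\Pr(\calE_{S'})\Pr(\calE_{S\oplus S'})$, and the unconditional inclusion $\calE_{S'}\cap\calE_{S\oplus S'}\subseteq\calE_S$ finishes the proof. This also confirms your closing remark: the hypothesis excluding $\emptyset\subset S'\subset S$ is used in the paper only to deduce ``$S$ connected'' from ``$S'$ connected'' and ``$S\oplus S'$ connected'', a step your parity-event argument does not need.
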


\begin{proof}
Fix $G=(V,E)$, $\lambda$, $S$ and $S'$ as in the statement of the lemma.
Let $\bbeta$ be the edge weighting of $G$ defined by $\bbeta(e)=(1+\blambda(e))/(1-\blambda(e))$.
Taking reciprocals, the inequality in the statement of the lemma is equivalent  by Lemma~\ref{lem:worm} to
\begin{equation}\label{eq:spincompare}
\Ex_{\piI}\left[\,\prod_{v\in S}\sigma(v)\right]\geq\Ex_{\piI}\left[\,\prod_{v\in S'}\sigma(v)\right]\times\Ex_{\piI}\left[\,\prod_{v\in S\oplus S'}\sigma(v)\right].
\end{equation} 

Now let $\bp$ be the edge weighting defined by 
$\bp(e)=1-1/\bbeta(e)$. 
Let $(A,\sigma)$ be drawn from the Edwards-Sokal distribution $\calD_{G,\bp}$.
For any subset $T$ of~$V$, let 
``$T$ is connected'' be a shorthand for the event ``$T$ is contained within a single connected component of $(V,A)$''.
Let $Y_T$ be the random variable $Y_T=\prod_{v\in T}\sigma(v)$. Then 
\begin{align*}
\Ex_{\DSW}[Y_T]=\Pr_{\DSW}(\text{$T$ is connected}) & \Ex_{\DSW}[Y_T\mid\text{$T$ is connected}]\\
  &+\Pr_{\DSW}(\text{$\neg$ $T$ is connected})\Ex_{\DSW}[Y_T\mid\text{$\neg$ $T$ is connected}].
\end{align*}

The definition of  $\calD_{G,\bp}$ (Definition~\ref{def:SW})
ensures that, for any set~$T$ with even cardinality,
$\Ex_{\DSW}[Y_T\mid\text{$T$ is connected}]=1$ and $\Ex_{\DSW}[Y_T\mid\text{$\neg$ $T$ is connected}]=0$.
Hence, $\Ex_{\DSW}[Y_T]=\Pr_{\DSW}(\text{$T$ is connected})$.
Using Lemma~\ref{lem:SW},
$$\Ex_{\piI}[Y_T] =  \Ex_{\DSW}[Y_T]=
\Pr_{\DSW}(\text{$T$ is connected})=
\Pr_{\pirc}(\text{$T$ is connected}).$$
Plugging this into~\eqref{eq:spincompare} with $T=S$, $T=S'$ and $T=S\oplus S'$,
we find that~\eqref{eq:spincompare} is equivalent  to the following inequality.
\begin{equation}\label{eq:almostFKG}
\Pr_{\pirc}(\text{$S$ is connected})\geq\Pr_{\pirc}(\text{$S'$ is connected})\times\Pr_{\pirc}(\text{$S\oplus S'$ is connected}).
\end{equation}

By considering the   possible intersections of~$S$ and~$S'$, 
recalling from the statement of the lemma 
that it is not the case that $\emptyset\subset S'\subset S$,
it is 
easy to see that
\begin{equation}\label{eq:easyineq}
 \Pr_{\pirc}(\text{$S$ is connected})\geq 
 \Pr_{\pirc}(\text{$S'$ is connected } \wedge \text{ $S\oplus S'$ is connected}).
\end{equation}
Now observe that ``$S'$ is connected'' and ``$S\oplus S'$ is connected'' are both monotonically
increasing events, and hence \eqref{eq:almostFKG} follows from \eqref{eq:easyineq} by Lemma~\ref{lem:FKG}. 
\end{proof}

\section{The weighted worm process}\label{sec:weightedworm}

Consider a graph $G=(V,E)$ with an edge weighting
$\blambda:E \to \ourrange\cap (0,1)$.

\begin{definition} \label{def:LamA}
A \emph{subset weighting} of~$G$
 is a function $w$ that assigns a weight $w_S\in \ourrange_{>0}$
to each subset $S$ of~$V(G)$
with $|S|\in\{0,2\}$.
We  refer to the pair $(\blambda,w)$ as a \emph{weighting} of~$G$.
Given a subset  $A\subseteq E(G)$, 
there is a unique $S(A)\subseteq V(G)$ such that $A\in \Omega_{S(A)}$.
If $|S(A)|\leq2$ we 
define $\Lambda(A) = \blambda(A) w_{S(A)}$. 
The partition function that we study is 
$$\hatZS{S} = \sum_{A\in\Omega_S}  \Lambda(A) = 
\sum_{A \in \Omega_S} \blambda(A) w_S
= w_S \ZE{S}
.$$
We also define $\hatZ = \sum_{S\subseteq V; |S| \leq 2} \hatZS{S}$.
\end{definition}

Later, we shall need to extend the above definition to subsets $S\subseteq V(G)$ 
with $|S|\leq4$ in the obvious way.
 
Recall from Definition~\ref{def:Omega} that
$\Omega=\bigcup_{S\subseteq V:|S|\leq2}\Omega_S$.    
The  weighted worm process   
is a Markov chain with state space~$\Omega$. 
The transitions of the process are given in Figure~\ref{fig:transition}.
\begin{figure}
\begin{algorithmic}
\STATE{ (* One transition from state $A\in \Omega$ *)}
\STATE{Choose the type of transition $T$ uniformly at random from $\{\text{``self-loop''},\text{``move''}\}$}\\
\IF {$T = \text{``self-loop''}$} 
      \STATE {the next state is $A$} 
\ELSE
{ 
\STATE {Choose an edge $e\in E$ uniformly at random}
\IF {$A\oplus\{e\}\in\Omega$} \STATE {$A'\leftarrow A\oplus\{e\}$} \ELSE \STATE {$A'\leftarrow A$} \ENDIF
\STATE {With probability $\min\{\Lambda(A')/\Lambda(A),1\}$ the next state is $A'$, otherwise $A$}
}\ENDIF
\end{algorithmic}
\caption{One transition of the weighted worm process for graph $G=(V,E)$
with weighting $(\blambda,w)$, starting at state $A\in\Omega$,
where 
$\Lambda(A) = \blambda(A) w_{S(A)}$. } \label{fig:transition}
\end{figure}
It is easy to see from the definition of the transitions that the weighted worm process is ergodic
and time-reversible
and that the stationary probability of
a configuration $A\in \Omega$ is $\pi(A) 
=\Lambda(A) / \hatZ$.

Given a subset $S$ of $V(G)$ with $|S|\leq 2$,
the probability of $\Omega_S$ in the stationary distribution of the weighted worm process
is 
$$\sum_{A\in \Omega_S} \pi(A)= \frac{w_S \ZE{S}}{\hatZ}
= \frac{w_S \ZE{S}}{\sum_{S'} w_{S'} \ZE{S'}},$$
where the sum is over all subsets $S' \subseteq V(G)$ with $|S'| \leq 2$.

Thus, we will be most interested in the weighted worm process
when the weighting satisfies 
$w_S =\ZE\emptyset/\ZE{S}$
so that all subsets~$S$ have equal weight.
We show in Section~\ref{sec:learn} how 
to ``learn'' such a weighting by running the process multiple times.
First, we consider the mixing rate of the process itself.

\subsection{Rapid mixing of the weighted worm process}\label{sec:CPC2}
 
In broad outline, the proof of rapid mixing follows existing work~\cite{JS93,CGHT}, 
but is complicated by the need to deal with the subset weightings.

We use $\calW(G)$ to denote the set of  weightings $(\blambda,w)$
where 
$\blambda:E \to \ourrange\cap (0,1)$ is an edge weighting of $G$
and $w$ is a subset weighting of~$G$
 satisfying  
  \begin{align}\nonumber
w_S=1, \quad &   \text{if $|S|=0$,} \\   \label{eq:weights}
w_S=0, \quad & \text{if $|S|=1$, and} \\ \nonumber
\frac{1}{2} \leq \frac{\hatZS{S}}{\hatZS\emptyset} \leq 2, \quad & \text{if $|S|=2$.} 
 \end{align}

The purpose of this section is to prove 
that the weighted worm process is rapidly mixing if 
$(\blambda,w)\in \calW(G)$ (see Lemma~\ref{lem:mix} below).

In order to prove rapid mixing, given a weighting $(\blambda,w)$
of~$G$
it will be useful to extend the subset weighting $w$  by 
defining $w_S = \ZE\emptyset/\ZE{S}$
for every $S$ with $|S|=4$.
The extended weighting will be used in the proof, but not in the Markov chain.
The following lemma
will be used in the proof of rapid mixing. 

\begin{lemma}
\label{lem:weights}
If $(\blambda,w) \in \calW(G)$ then, 
for every subset~$S$ of $V(G)$ with $|S|\in \{0,4\}$
we have $w_S =   {\ZE\emptyset}/{\ZE{S}}$.
For every size-$2$ subset $S$ of $V(G)$ we have
$$  \frac{\ZE\emptyset}{2 \ZE{S}} \leq 
w_S
\leq 
 \frac{2 \ZE\emptyset}{\ZE{S}}.$$
\end{lemma}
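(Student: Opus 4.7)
The statement naturally splits into three cases according to $|S|\in\{0,2,4\}$, and in each case my plan is simply to unpack the relevant definition; there is no real obstacle here, since the lemma is essentially a repackaging of the normalisation conditions in \eqref{eq:weights} together with the extension to $|S|=4$ that is introduced just before the lemma.

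For $|S|=0$ the only subset is $S=\emptyset$, and the first line of \eqref{eq:weights} gives $w_\emptyset = 1$, while the right-hand side $\ZE\emptyset/\ZE\emptyset$ also equals~$1$. For $|S|=4$ the assertion is literally the definition of the extended subset weighting stated in the paragraph preceding the lemma, so there is nothing to prove.

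The substantive (but still essentially immediate) case is $|S|=2$. Here I would start from the third line of \eqref{eq:weights}, namely
\[
\tfrac{1}{2} \;\leq\; \frac{\hatZS{S}}{\hatZS\emptyset} \;\leq\; 2,
\]
and substitute the identities $\hatZS{S}=w_S\ZE{S}$ and $\hatZS\emptyset = w_\emptyset\ZE\emptyset = \ZE\emptyset$ from Definition~\ref{def:LamA} (using $w_\emptyset=1$ from the case just handled). This rewrites the inequality as
\[
\tfrac{1}{2} \;\leq\; \frac{w_S\,\ZE{S}}{\ZE\emptyset} \;\leq\; 2,
\]
and multiplying through by $\ZE\emptyset/\ZE{S}$, which is positive, yields exactly the required two-sided bound
\[
\frac{\ZE\emptyset}{2\,\ZE{S}} \;\leq\; w_S \;\leq\; \frac{2\,\ZE\emptyset}{\ZE{S}}.
\]

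So the plan is pure bookkeeping: record $w_\emptyset=1$, invoke the extension definition for $|S|=4$, and rearrange the normalisation inequality for $|S|=2$. The reason the lemma is stated at all is that these specific normalised forms of $w_S$ are what subsequent arguments (the canonical-paths analysis in the mixing proof, and the weight-learning procedure) need to plug in directly.
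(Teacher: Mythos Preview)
Your proposal is correct and follows exactly the same approach as the paper: treat $|S|\in\{0,4\}$ as immediate from the definitions, and for $|S|=2$ substitute $\hatZS{S}=w_S\ZE{S}$ and $\hatZS\emptyset=\ZE\emptyset$ into \eqref{eq:weights} and rearrange. The paper's own proof is even terser but identical in substance.
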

 \begin{proof}
 The lemma 
 follows trivially from the definition of $w_S$ if $|S|=0$ or $|S|=4$, so suppose that $|S|=2$. 
 From~\eqref{eq:weights} 
 and the definitions of $\ZE{S}$ and $\hatZS{S}$
 we have
 $$
 \frac{1}{2} \leq 
 \frac
{ w_S  \ZE{S}  }
{  \ZE\emptyset }  \leq 2, $$
as required.
 \end{proof}

In order to bound the mixing time of the weighted worm process we use the canonical path method
or, more precisely, a well-known generalisation of the method that replaces paths by flows.
We briefly describe the method, using notation that is slightly more general than that of the weighted worm process.  
Consider a Markov chain~$\calM$  on a state space $\Omegastar$  with transition 
matrix $P$ and stationary distribution~$\pistar$.
 A \emph{path} from a state~$I\in \Omegastar$ to
 a state~$F\in \Omegastar$ is  a sequence $I=T_0,\ldots,T_k=F$ of   states, all of which are distinct except possibly~$I$ and~$F$,
 such that, for each $i\in \{0,\ldots,k-1\}$, $P_{T_i,T_{i+1}} >0$.
 A \emph{flow} $f_{I,F}$ is
a distribution whose support is the set of paths from~$I$ to~$F$
which is 
 normalised so that $\sum_pf_{I,F}(p)=\pistar(I)\pistar(F)$.  
 Typically, when we refer to a flow $f_{I,F}$, we refer to~$I$ as the ``initial state'' and
 to~$F$ as the ``final state''.
The collection of all flows is $\calF=\{f_{I,F}:I,F\in\Omegastar\}$.  
The \emph{congestion} of this collection of flows is 
\begin{equation}\notag 
\rho(\calF)=\max_{(T,T')}\left\{\frac1{\pi(T)P(T,T')}\sum_{I,F\in\Omegastar}\,\,\sum_{p=I,\ldots ,T,T',\ldots ,F}
  f_{I,F}(p)\> |p|\right\},
\end{equation} 
where the maximisation is over all transitions $(T,T')$ with $P(T,T')>0$, 
the second sum is over all  paths~$p$ from $I$ to $F$   that use transition $(T,T')$,
and $|p|$  denotes the length of  path~$p$.

The mixing time $\tmix{T}(\delta)$ of~$\calM$, when starting from state~$T$, is defined
to be the minimum time~$t$ such that the total variation distance between the $t$-step distribution
$P^t(T,\cdot)$ and the stationary distribution~$\pistar$
of~$\calM$
 is at most~$\delta$.
The existence of a collection of flows with small congestion implies rapid mixing.
The following lemma  is due to Sinclair~\cite{Sin92}, building on work of Diaconis and Stroock~\cite{DS91}.
The explicit statement that we use is taken from \cite[Lemma 2.2]{JSV}
\begin{lemma}\label{lem:mixvscongestion}
Let $\calM$ be an ergodic
   time-reversible Markov chain with state space $\Omegastar$ and stationary distribution~$\pistar$ 
  whose self-loop probabilities satisfy $P(T,T)\geq 1/2$ for all states $T$.
  Suppose that $\calM$ supports a collection $\calF$ of flows.
  Given any  state $T_0\in \Omegastar$,
   $$\tmix{T_0}(\delta) \leq \rho(\calF) \left(\ln \left(\frac{1}{\pistar(T_0)}\right)+\ln\left(\frac{1}{\delta}\right)\right).$$
\end{lemma}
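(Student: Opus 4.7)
The plan is to prove the lemma by the standard two-step route through the spectral gap of~$\calM$, following Sinclair~\cite{Sin92} and Diaconis--Stroock~\cite{DS91}. The first step bounds mixing time in terms of the spectral gap, and the second (a Poincar\'e inequality) bounds the inverse spectral gap by the congestion $\rho(\calF)$.

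For the spectral step, the hypothesis $P(T,T)\geq 1/2$ ensures that every eigenvalue of $P$ lies in $[0,1]$: the matrix $2P-I$ is self-adjoint with respect to the $\pistar$-weighted inner product (by reversibility) and has operator norm at most~$1$, so its eigenvalues lie in $[-1,1]$. Hence the absolute spectral gap equals $\gamma := 1-\lambda_2$, where $\lambda_2$ is the second-largest eigenvalue. Diagonalising $P$ in the $\pistar$-weighted inner product and expanding the $t$-step distribution started from~$T_0$ in the resulting orthonormal basis gives the standard $L^2$ contraction bound
$$2\|P^t(T_0,\cdot)-\pistar\|_{TV} \leq (1-\gamma)^t/\sqrt{\pistar(T_0)} \leq e^{-\gamma t}/\sqrt{\pistar(T_0)}.$$
Requiring the right-hand side to be at most $2\delta$ and solving for~$t$ yields a bound that is tighter than, and so implies, $\tmix{T_0}(\delta) \leq \gamma^{-1}(\ln(1/\pistar(T_0))+\ln(1/\delta))$.

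For the Poincar\'e step, invoke the variational characterisation $\gamma = \inf_{\phi}\calE(\phi)/\mathrm{Var}_{\pistar}(\phi)$, with Dirichlet form $\calE(\phi)=\tfrac12\sum_{T,T'}\pistar(T)P(T,T')(\phi(T)-\phi(T'))^2$, and write the variance as the ordered-pair sum $\tfrac12\sum_{I,F}\pistar(I)\pistar(F)(\phi(F)-\phi(I))^2$. For each ordered pair $(I,F)$, apportion the contribution among paths according to the flow~$f_{I,F}$, telescope $\phi(F)-\phi(I)=\sum_{(T,T')\in p}(\phi(T')-\phi(T))$ along each path~$p$, and apply Cauchy--Schwarz to introduce the factor~$|p|$. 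Swapping sums so that each transition $(T,T')$ is grouped together yields $\mathrm{Var}_{\pistar}(\phi)\leq\rho(\calF)\,\calE(\phi)$, hence $1/\gamma\leq\rho(\calF)$. Combining with the spectral step gives the lemma.

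The main obstacle is bookkeeping in the Poincar\'e step: the factor $|p|$ produced by Cauchy--Schwarz must couple correctly with $f_{I,F}(p)\,|p|$ in the definition of $\rho(\calF)$, and the swap of summations must convert $\pistar(I)\pistar(F)$ together with $\sum_{p\ni(T,T')}f_{I,F}(p)$ into precisely the expression under the maximum that defines congestion. Once the indices are lined up the calculation is mechanical, but this is the step where the structure of $\rho(\calF)$ as a maximum over transitions, rather than a sum, is genuinely used.
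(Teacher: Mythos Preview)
The paper does not actually prove this lemma: it simply quotes it, attributing the result to Sinclair~\cite{Sin92} (building on Diaconis--Stroock~\cite{DS91}) and citing the explicit statement as \cite[Lemma~2.2]{JSV}. Your proposal correctly reconstructs the standard proof behind those citations---the spectral mixing bound combined with Sinclair's flow-based Poincar\'e inequality $\gamma^{-1}\leq\rho(\calF)$---so there is no discrepancy to report.
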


A standard method for defining a collection of flows
is to partition the state space~$\Omegastar$ into two parts~$\Omegastar_1$ and~$\Omegastar_2$,
define canonical paths from every state~$I\in \Omegastar_1$ to every state~$F\in \Omegastar_2$,
and then use an idea similar to Valiant's randomised routing~\cite{Val82}
to obtain a collection of flows.
 Thus, for each pair or initial and final states $(I,F)\in\Omegastar_1\times\Omegastar_2$ 
we specify a path $\cp IF$   from $I$ to~$F$. 
The collection of all such canonical paths is $\Gamma=\{\cp IF:(I,F)\in\Omegastar_1\times\Omegastar_2\}$.  
For each possible transition $(T,T')$ of the Markov chain, denote by 
$$\cps(T,T')=\big\{(I,F)\in\Omegastar_1\times\Omegastar_2:\text{$\cp IF$ uses the transition $(T,T')$}\big\}$$ 
the set of canonical paths using $(T,T')$.
The \emph{congestion} of~$\Gamma$ is then given by 
\begin{equation}\label{eq:pathcongestion}
\rho(\Gamma)=\max_{(T,T')}\left\{\frac1{\pistar(T)P(T,T')}\sum_{(I,F)\in\cps(T,T')}\pistar(I)\pistar(F)\>|\gamma(I,F)|\right\}.
\end{equation} 

The next step is to use the canonical paths in $\Gamma$
to induce a collection~$\calF$ of flows, via randomised routing:
If $I$ and $F$ are in $\Omegastar_1$ then
the flow $f_{I,F}$ is constructed by choosing  
intermediate states $T\in \Omegastar_2$ and routing flow via the 
path $\gamma(I,T)$ followed by the reversal of the path $\gamma(F,T)$.
Similarly, flow from $\Omegastar_2$ to~$\Omegastar_2$ is routed via a random intermediate
state in $\Omegastar_1$.
The following lemma 
shows that if the congestion~$\rho(\Gamma)$ is low then 
 the resulting collection~$\calF$ also has low congestion.
 The lemma is a direct translation of
 Lemma~4.4 of~\cite{JSV} into the more general language of this section.  A similar lemma was used earlier by Schweinsberg~\cite{Schw02}.

 \begin{lemma}\label{lem:pathstoflows}
Given a partition $\{\Omegastar_1,\Omegastar_2\}$ of the state space $\Omegastar$ of a time-reversible Markov chain,
and a collection $\Gamma$ of canonical paths from $\Omegastar_1$ to~$\Omegastar_2$ with congestion $\rho(\Gamma)$,  
there exists a collection of flows~$\calF$ with congestion
$$
\rho(\calF)\leq \left(2+4\left(\frac{\pistar(\Omegastar_1)}{\pistar(\Omegastar_2)}+\frac{\pistar(\Omegastar_2)}{\pistar(\Omegastar_1)}\right)\right)\rho(\Gamma).
$$
\end{lemma}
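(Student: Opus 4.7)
The plan is to construct $\calF$ by Valiant-style randomised routing and then to bound its congestion by a case analysis on how the flow through a fixed transition decomposes.

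First I would define the flows. For mixed pairs $(I,F)\in\Omegastar_1\times\Omegastar_2$, I place all the flow $\pistar(I)\pistar(F)$ on the canonical path $\cp{I}{F}\in\Gamma$, and symmetrically for $(I,F)\in\Omegastar_2\times\Omegastar_1$ using the reversal of $\cp{F}{I}$. For a same-side pair $(I,F)\in\Omegastar_1\times\Omegastar_1$, I choose a random intermediate $T^{\mathrm{int}}\in\Omegastar_2$ with probability $\pistar(T^{\mathrm{int}})/\pistar(\Omegastar_2)$ and route along the concatenation of $\cp{I}{T^{\mathrm{int}}}$ with the reversal of $\cp{F}{T^{\mathrm{int}}}$, assigning this composed path the weight $\pistar(I)\pistar(F)\pistar(T^{\mathrm{int}})/\pistar(\Omegastar_2)$; same-side pairs in $\Omegastar_2\times\Omegastar_2$ are handled symmetrically, with intermediates drawn from $\Omegastar_1$. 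Summing over intermediates confirms $\sum_{p}f_{I,F}(p)=\pistar(I)\pistar(F)$, so $\calF$ is a valid collection of flows.

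Next, I would bound $\rho(\calF)$ by fixing a transition $(T,T')$ with $P(T,T')>0$ and decomposing the flow through it. Any composed path that uses $(T,T')$ consists of at most two canonical-path segments from $\Gamma$, so the contributing pairs fall into four classes: (i)~mixed pairs where $(T,T')$ lies on $\cp{I}{F}$ directly; (ii)~same-side pairs in $\Omegastar_1\times\Omegastar_1$ where $(T,T')$ lies on the first segment $\cp{I}{T^{\mathrm{int}}}$; (iii)~the analogous reversed-segment case where the transition $(T',T)$ lies on $\cp{F}{T^{\mathrm{int}}}$; and (iv)~the $\Omegastar_2\times\Omegastar_2$ symmetric versions of (ii)--(iii). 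In each of cases (ii)--(iv), one endpoint ranges freely, and summing out the free endpoint $X\in\Omegastar_i$ using $\sum_{X\in\Omegastar_i}\pistar(X)=\pistar(\Omegastar_i)$, together with the definition~\eqref{eq:pathcongestion} of $\rho(\Gamma)$, bounds the contribution by $\rho(\Gamma)\cdot\pistar(\Omegastar_i)/\pistar(\Omegastar_j)$. Upper-bounding the composed path length by $|\cp{I}{T^{\mathrm{int}}}|+|\cp{F}{T^{\mathrm{int}}}|$ and distributing the sum across the two halves yields a factor $2$; combined with the factor $2$ from the two possible positions of $(T,T')$ within a composed same-side path, this produces the constant $4$ multiplying $\pistar(\Omegastar_1)/\pistar(\Omegastar_2)+\pistar(\Omegastar_2)/\pistar(\Omegastar_1)$. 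The additive constant $2$ collects the contributions of the two mixed-pair classes, for which no intermediate appears and each directly yields $\rho(\Gamma)$.

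The main obstacle is the careful bookkeeping in the congestion sum: one has to separate, for each transition $(T,T')$, the contributions of the first and second canonical-path segments of each composed path, and verify that in each same-side subcase the ``free'' summation collapses cleanly to $\pistar(\Omegastar_i)$ and interacts correctly with the randomised choice of intermediate. Time-reversibility of the chain, $\pistar(T)P(T,T')=\pistar(T')P(T',T)$, is what allows the analysis of a reversed segment traversing $(T,T')$ to be reduced to the analysis of a forward segment traversing $(T',T)$ with the same constant $\rho(\Gamma)$ in front. All the remaining steps---validity of $\calF$ as a flow, the triangle-inequality bound on path lengths, and the partition into cases---are routine.
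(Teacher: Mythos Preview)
The paper does not give a proof of this lemma; it simply cites Lemma~4.4 of \cite{JSV} (with a pointer to Schweinsberg~\cite{Schw02} for a related result), stating that the present lemma is a direct translation of that result into the paper's notation. Your outline---Valiant-style randomised routing through an intermediate state on the opposite side of the partition, followed by a case analysis on which segment of the composed path carries the fixed transition, using time-reversibility to handle reversed segments---is exactly the construction the paper sketches in the paragraph immediately preceding the lemma, and is the standard argument from \cite{JSV}. So your approach and the paper's coincide.
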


A bound on the mixing time of the Markov chain~$\calM$ 
can be derived by 
constructing low-congestion canonical paths from $\Omegastar_1$ to $\Omegastar_2$,
using Lemma~\ref{lem:pathstoflows} to derive a collection of flows
with low congestion, and then applying  Lemma~\ref{lem:mixvscongestion}.
We next apply these methods to bound the mixing time of the weighted worm process. 

\begin{lemma}
\label{lem:mix}
Suppose that $G=(V,E)$ is a connected graph with $n$ vertices and $m$ edges and
$(\blambda,w)\in \calW(G)$.
Let $\lambdamin=\min_{e\in E}\lambda(e)$. 
 Then the  weighted worm process, started in the empty configuration on~$G$,
has mixing time 
$\tmix{\emptyset}(\delta) =
O(\lambdamin^{-2}n^4m^2)\left( O(m)+\ln\left(\frac{1}{\delta}\right)\right) $.
\end{lemma}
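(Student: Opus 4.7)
The plan is to apply the flow-based mixing bounds of Lemmas~\ref{lem:pathstoflows} and~\ref{lem:mixvscongestion}. I would partition the state space as $\Omega=\Omega_1\cup\Omega_2$, where $\Omega_1=\Omega_\emptyset$ and $\Omega_2=\bigcup_{|S|=2}\Omega_S$. Since $(\blambda,w)\in\calW(G)$ forces $\tfrac12\hatZS\emptyset\le\hatZS{S}\le 2\hatZS\emptyset$ for every size-$2$ set~$S$, and there are $\binom{n}{2}$ such sets, the ratios $\pi(\Omega_1)/\pi(\Omega_2)$ and $\pi(\Omega_2)/\pi(\Omega_1)$ sum to $O(n^2)$, so Lemma~\ref{lem:pathstoflows} inflates canonical-path congestion by at most $O(n^2)$. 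Similarly, the initial-state cost $\ln(1/\pi(\emptyset))$ in Lemma~\ref{lem:mixvscongestion} is $O(m)$, since $\pi(\emptyset)=1/\hatZ$ and $\hatZ\le 2^{O(m)}$.

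For $I\in\Omega_1$ and $F\in\Omega_{\{s,t\}}\subseteq\Omega_2$, the symmetric difference $I\oplus F$ decomposes into a single $s$-$t$ path together with some edge-disjoint cycles. I fix a linear order on $V$, use it to select a canonical ``starting vertex'' and traversal order inside each cycle and inside the $s$-$t$ path, and then define $\gamma(I,F)$ by processing the cycles first (in some canonical order) and then the $s$-$t$ path, flipping one edge at a time. In each phase the two odd-degree ``worm ends'' either chase each other around a single cycle or walk from~$s$ to~$t$ along the path, so every intermediate configuration lies in $\Omega$, and the path length is at most $|I\oplus F|\le m$.

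For each transition $(T,T')$ with $T'=T\oplus\{e\}$, I use the standard encoding $\eta:=I\oplus F\oplus T$. Three properties drive the congestion bound. First, a parity check (using that $I\oplus T$, the set of already-processed edges, is a subset of $I\oplus F$) shows $T\cap\eta=I\cap F$, and hence the unweighted identity $\blambda(I)\blambda(F)=\blambda(T)\blambda(\eta)$. Second, $\eta\in\Omega_{S(\eta)}$ with $|S(\eta)|=|S(T)\oplus S(F)|\le 4$, which is precisely why $w$ was extended to size-$4$ subsets via $w_S=\ZE\emptyset/\ZE{S}$ just before Lemma~\ref{lem:weights}. Third, $(T,T',\eta)$ recovers $(I,F)$ up to $\mathrm{poly}(n)$ choices, since the canonical edge-flip order is determined by the path-and-cycle decomposition of $I\oplus F=T\oplus\eta$, with the only ambiguity arising at vertices of degree $\ge 4$.

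The main obstacle is controlling the subset-weight ratio $w_{S(F)}/(w_{S(T)}w_{S(\eta)})$ uniformly across cases, and this is exactly what Lemma~\ref{lem:weightcompare} was designed for. Since $S(F)=S(T)\oplus S(\eta)$ with $|S(F)|=2$, I apply the lemma with $S:=S(F)$ and $S':=S(T)$: the hypothesis ``not $\emptyset\subset S'\subset S$'' is automatic whenever $|S(T)|\in\{0,2\}$ (proper strict inclusion would force $|S(T)|=1$, which is excluded from $\Omega$). Combining with the sandwich bound of Lemma~\ref{lem:weights} in each of the subcases $|S(\eta)|\in\{0,2,4\}$ yields an absolute-constant bound $w_{S(F)}\le c\,w_{S(T)}w_{S(\eta)}$, and hence $\Lambda(I)\Lambda(F)\le c\,\Lambda(T)\Lambda(\eta)$. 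Grouping the sum $\sum_{(I,F)\in\cps(T,T')}\Lambda(I)\Lambda(F)$ by encoding, combining with the Metropolis lower bound $\pi(T)P(T,T')\ge\min\{\Lambda(T),\Lambda(T')\}/(2m\hatZ)$ (contributing a factor $O(\lambdamin^{-1})$ via $\Lambda(T)/\min\{\Lambda(T),\Lambda(T')\}$, with a second $\lambdamin^{-1}$ arising from reconciling the size-$4$ sector of the $\eta$-sum against~$\hatZ$), the length bound $|\gamma|\le m$, and the $\mathrm{poly}(n)$ preimage count, gives $\rho(\Gamma)=O(\lambdamin^{-2}n^2m^2)$. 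Lemma~\ref{lem:pathstoflows} then contributes an extra $O(n^2)$ factor and Lemma~\ref{lem:mixvscongestion} the $(O(m)+\ln(1/\delta))$ overhead, producing the stated bound.
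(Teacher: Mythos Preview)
Your approach is essentially the paper's: partition $\Omega$ into the even part and the two-odd-vertex part, build canonical paths via a path-plus-cycles decomposition of $I\oplus F$, use the XOR encoding $\eta=I\oplus F\oplus T$, invoke Lemma~\ref{lem:weightcompare} (through Lemma~\ref{lem:weights}) to control the subset-weight ratio, and finish with Lemmas~\ref{lem:pathstoflows} and~\ref{lem:mixvscongestion}. Swapping the roles of $\Omega_1,\Omega_2$ and processing cycles before the path are harmless cosmetic differences.

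Two pieces of your bookkeeping are wrong, however. First, $\Lambda(T)/\min\{\Lambda(T),\Lambda(T')\}$ is \emph{not} $O(\lambdamin^{-1})$. Flipping one edge $e=\{u,v\}$ changes $\blambda(\cdot)$ by a factor $\lambda(e)^{\pm1}$ \emph{and} changes the subset weight by a factor $w_{S\oplus\{u,v\}}/w_S$, which by Lemma~\ref{lem:weights} together with the one-edge bound $\lambdamin\le\ZE{S}/\ZE{S\oplus\{u,v\}}\le\lambdamin^{-1}$ can itself be as extreme as $4/\lambdamin$. Hence $\Lambda(T')/\Lambda(T)\ge\lambdamin^2/4$, so $P(T,T')\ge\lambdamin^2/(8m)$, and the entire $\lambdamin^{-2}$ lives here; there is no $\lambdamin$-dependence whatsoever in ``reconciling the size-$4$ sector against $\hatZ$''.

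Second, once you fix a \emph{canonical} decomposition of $I\oplus F$ (your linear order on~$V$ does exactly this, including at vertices of degree $\ge4$), the encoding is \emph{injective}: from $T\oplus\eta=I\oplus F$ you recover the canonical decomposition uniquely, and the single edge $T\oplus T'$ pinpoints your position within it. There is no $\mathrm{poly}(n)$ preimage count. The $O(n^2)$ factor in $\rho(\Gamma)$ instead comes from $\Lambda(\Omegahat)/\Lambda(\Omega)$: the encoding $\eta$ ranges over $\Omegahat=\bigcup_{|S|\le4}\Omega_S$, which has $O(n^4)$ sectors each with $\hatZS{S}\le2\,\hatZS\emptyset$, against $\Omega$'s $O(n^2)$ sectors each with $\hatZS{S}\ge\tfrac12\,\hatZS\emptyset$. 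Your final numbers $\rho(\Gamma)=O(\lambdamin^{-2}n^2m^2)$ and $\rho(\calF)=O(\lambdamin^{-2}n^4m^2)$ are correct, but the attribution of the factors needs to be fixed as above.
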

 
\begin{proof}

As we observed earlier, the weighted worm process is
a time-reversible Markov chain with state space $\Omega=\bigcup_{S\subseteq V:|S|\leq2}\Omega_S$.
Our goal will be to apply Lemma~\ref{lem:pathstoflows}.
To this end,  let $\Omegabar=\Omega\setminus\Omega_\emptyset=\bigcup_{S\subseteq V:|S|=2} \Omega_S$.   
We will define a collection~$\Gamma$ of canonical paths from~$\Omegabar$ to~$\Omega_\emptyset$.
We will bound the congestion $\rho(\Gamma)$ and use Lemma~\ref{lem:pathstoflows} and Lemma~\ref{lem:mixvscongestion}
to bound the mixing time.

We start by constructing a canonical path  from 
any configuration  $I\in\Omegabar$  to any configuration $F\in\Omega_\emptyset$.   
Let $a$ and $b$ be the two odd-degree vertices in $I$.
The vertices of the graph $(V,I\oplus F)$  all have even degree, except for $a$ and~$b$.  
Choose a canonical partition of $I\oplus F$ 
into a path~$\Pi$  from $a$ to~$b$, and a number of cycles $C_1,C_2,\ldots,C_k$;  also choose a distinguished
end vertex for $\Pi$ and a distinguished vertex and orientation for each cycle. To \emph{unwind} a path or cycle, 
start at the distinguished vertex and travel along the path or around the oriented cycle flipping all edges 
along the way.  The act of \emph{flipping} changes the status of an edge from absent to present or vice versa.
The canonical path $\cp I F$ is obtained 
by unwinding first the path~$\Pi$ and then the cycles $C_1,C_2,\ldots,C_k$, in order.
Note that all of the flips are transitions of the Markov chain corresponding to the weighted worm process.

Fix any transition $(T, T')$
that can be made by the Markov chain, and let 
$$\cps(T, T')=\{(I,F)\in \Omegabar \times \Omega_\emptyset \colon \text{$\cp IF$  uses the transition $(T, T')$}\}$$  
be the set of canonical paths using this transition.  
Our goal is to bound the congestion through the transition $(T, T')$.

Denote by $\Omegahat$ the extended state space $\Omegahat=\bigcup_{S\subseteq V:|S|\leq4}$.
Consider the function $\enc:\cps(T, T')\to\Omegahat$ defined by $\enc(I,F)=I\oplus F\oplus T$.
(Note that the range of $\enc$ is contained in $\Omegahat$.)
We claim that $\enc$ is injective.  To see this, suppose   $(I,F)\in\cps(T, T')$ and let $X=\enc(I,F)$.
Since $I\oplus F=T\oplus X$, we can  
use the fixed configuration~$T$ from the transition
and the known value~$X$ to
recover $I\oplus F$ and hence the path $\Pi$ and the cycles 
$C_1,C_2,\ldots,C_k$.  The set $T\oplus T'$ contains a single edge~$e$, which tells which of 
$\Pi,C_1,C_2,\ldots,C_k$ is 
having its edges flipped by the particular transition $(T, T')$.   Using this information, we can apportion the 
edges in $\Pi\cup C_1\cup C_2\cup\cdots\cup C_k=I\oplus F$ between $I$ and~$F$.  
(Each edge is either in $I$ or $F$ but not both.)  
Say that $I\oplus F$ is the disjoint union of $I'$ and~$F'$,
with $I'\subseteq I$ and $F'\subseteq F$. 
We can then recover $I$ and $F$ themselves  using the equalities $I=I'\cup(I\cap F)=I'\cup(T\cap X)$\footnote{
To see that $I\cap F = T\cap X$ consider some $e\in I \cap F$. Note from the definition of $\cp I F$ that
$e$ is in every configuration along way from~$I$ to $F$. Hence $e$ is in $T$. By the definition of~$X$, $e$ is also in $X$. The other direction is similar.} 
and 
$F=F'\cup(I\cap F)=F'\cup(T\cap X)$.   
Thus, we have shown that $\enc$ is injective.

We now proceed to bound the congestion through the transition $(T, T')$.
Note that $\blambda(I)\blambda(F)=\blambda(T)\blambda(X)$, which is exactly what we would need for the analysis of
the unweighted case, i.e., when $w_S=1$ for all $S$.  To analyse the weighted case we need to relate 
$\Lambda(I)\Lambda(F)$ to $\Lambda(T)\Lambda(X)$.  There are three cases, depending on where the transition $(T, T')$
occurs on the canonical path from $I$ to $F$.  
\begin{itemize}

\item The transition is the first one of all.  Then $T=I$ and $X=F$, and so
$\Lambda(I)\Lambda(F)=\Lambda(T)\Lambda(X)$.

\item The transition is on the unwinding of the path $\Pi$.  Then $T\in\Omega_{\{b,c\}}$, where $c$ is a vertex on the
path~$\Pi$, from which it follows that $X\in\Omega_{\{a,c\}}$. We will show
$$ 
\Lambda(I)\Lambda(F)= w_{\{a,b\}}\blambda(I)w_\emptyset\blambda(F)\leq \KK w_{\{b,c\}}\blambda(T)w_{\{a,c\}}\blambda(X)
=\KK\Lambda(T)\Lambda(X).
$$
To establish the inequality
recall that $\blambda(I)\blambda(F)=\blambda(T)\blambda(X)$
so, cancelling these out, and noting that $w_\emptyset=1$, it suffices to show 
$ w_{\{a,b\}}   \leq \KK w_{\{b,c\}} w_{\{a,c\}}$.
Using Lemma~\ref{lem:weights},
it suffices to show
$$ 
 \frac{2 \ZE\emptyset}{ \ZE{\{a,b\}}}
\leq \KK 
\frac{\ZE\emptyset}{2 \ZE{\{b,c\}}}\>
\frac{\ZE\emptyset}{2 \ZE{\{a,c\}}},$$
 which follows from Lemma~\ref{lem:weightcompare}
taking $S=\{a,b\}$ and $S' = \{b,c\}$.

\item The transition is the first one in the unwinding of a cycle.  
Then $T\in\Omega_\emptyset$ and
and $X\in\Omega_{\{a,b\}}$, and hence
$$ 
\Lambda(I)\Lambda(F)=w_{\{a,b\}}\blambda(I)w_\emptyset\blambda(F)=w_\emptyset\blambda(T)w_{\{a,b\}}\blambda(X)=\Lambda(T)\Lambda(X).
$$
 
\item The transition arises during the unwinding of a cycle but is not the first such transition. 
Then $T\in\Omega_{\{c,d\}}$ for vertices $c$ and~$d$ on the cycle,
and $X\in\Omega_{\{a,b,c,d\}}$.  
We will show 
$$ 
\Lambda(I)\Lambda(F)= w_{\{a,b\}}\blambda(I)w_\emptyset\blambda(F)\leq \KK w_{\{c,d\}}\blambda(T)w_{\{a,b,c,d\}}\blambda(X)
=\KK\Lambda(T)\Lambda(X).
$$
As in the second case, it suffices to show 
$ w_{\{a,b\}}   \leq \KK w_{\{c,d\}} w_{\{a,b,c,d\}}$.
Using Lemma~\ref{lem:weights},
it suffices to show
$$ 
 \frac{2 \ZE\emptyset}{ \ZE{\{a,b\}}}
\leq \KK 
\frac{\ZE\emptyset}{2 \ZE{\{c,d\}}}\>
\frac{\ZE\emptyset}{\ZE{\{a,b,c,d\}}},$$
 which follows from Lemma~\ref{lem:weightcompare}
 (with a factor of~$2$ to spare)
taking $S=\{a,b\}$ and $S' = \{c,d\}$. 

\end{itemize}

Note that in all instances, $\Lambda(I)\Lambda(F)\leq \KK \Lambda(T)\Lambda(X)$. Given a set $\Psi \subseteq \Omegahat$,
we use $\Lambda(\Psi)$ to denote $\sum_{C\in \Psi} \Lambda(C)$. 
The probability of a configuration~$C\in \Omega$ in the 
stationary distribution of the weighted worm process is then
$\pi(C) = \Lambda(C)/\Lambda(\Omega)$. 
 We can then  bound the congestion through transition $(T, T')$ arising from the canonical paths $\Gamma$ as follows.
\begin{align}
\sum_{(I,F)\in\cps(T, T')}\pi(I)\pi(F)&=\frac1{\Lambda(\Omega)^2}\sum_{(I,F)\in\cps(T, T')}\Lambda(I)\Lambda(F)\notag\\
&\leq\frac\KK{\Lambda(\Omega)^2}\sum_{(I,F)\in\cps(T, T')}\Lambda(T)\Lambda(\enc(I,F))\notag\\
&\leq \frac\KK{\Lambda(\Omega)^2}\sum_{X\in\Omegahat}\Lambda(T)\Lambda(X)\notag\\
&=\KK\times\frac{\Lambda(\Omegahat)}{\Lambda(\Omega)}\times\frac{\Lambda(T)}{\Lambda(\Omega)}\notag\\
&= O(n^2)\pi(T)\label{eq:flowbd}.
\end{align}
The second inequality uses the fact that $\enc$ is injective.
The final equality follows from  
the observation that 
$$
\Lambda(\Omega)=\sum_{S:|S|\in\{0,2\}}\hatZS{S}
\quad\text{and}\quad
\Lambda(\Omegahat)=\sum_{S:|S|\in\{0,2,4\}}\hatZS{S}.
$$
The first sum has $O(n^2)$ terms and the second $O(n^4)$.
Thus,  
$$
\frac{\Lambda(\Omegahat)}{\Lambda(\Omega)} \leq
O(n^2) \frac{\max_{S:|S|\in\{0,2,4\}}\hatZS{S}}
{\min_{S:|S|\in\{0,2\}}\hatZS{S}}
=  
O(n^2) \frac{\max_{S:|S|\in\{0,2,4\}}   w_S \ZE{S} }
{\min_{S:|S|\in\{0,2\}} w_S \ZE{S}}.
$$
By Lemma~\ref{lem:weights},
this is at most 
$$
O(n^2) \frac{2    \ZE{\emptyset} }
{ \tfrac12   \ZE{\emptyset}} = O(n^2),
$$
so the final equality holds.

By establishing~\eqref{eq:flowbd},
we have done most of the work required to estimate the congestion $\rho(\Gamma)$ in~\eqref{eq:pathcongestion}.  
Since  the paths have length at most~$m$, the only remaining task is to lower bound the
transition probability $P(T,T')$.  
Let $e=\{u,v\}\in E$ be any edge, and let $S\subseteq V$ be any subset of 
vertices of even cardinality.  There is a bijection between $\Omega_S$ and $\Omega_{S\oplus\{u,v\}}$ 
obtained by flipping the edge~$e$.  Since this operation changes only a single edge, we see that
$$
\lambda(e)\ZE{S}\leq\ZE{S\oplus\{u,v\}}\leq\lambda(e)^{-1}\ZE{S}.
$$
Then, from Lemma~\ref{lem:weights},
$$
\frac{\lambdamin}4\leq\frac{\ZE{S}}{4\ZE{S\oplus\{u,v\}}}
\leq\frac{w_{S\oplus\{u,v\}}}{w_S}
\leq \frac{4\ZE{S}}{\ZE{S\oplus\{u,v\}}}\leq\frac4{\lambdamin}.
$$
Since $T$ and $T'$ differ by a single edge, this implies
$$
\frac{\lambdamin^2}4\leq\frac{\Lambda(T')}{\Lambda(T)}\leq \frac4{\lambdamin^2}.
$$

Going back to the definition of the weighted worm process in Figure~\ref{fig:transition},
it follows that
$P(T,T') \geq \tfrac12 \tfrac1m \min\{\Lambda(T')/\Lambda(T),1\}
\geq \lambdamin^2 /(8m)
$.

Now, starting from~\eqref{eq:pathcongestion},
and plugging in the bound  that path-lengths are at most~$m$ and~\eqref{eq:flowbd} and then 
this bound, we get
\begin{align*}
\rho(\Gamma)&=\max_{(T,T')}\left\{\frac1{\pi(T)P(T,T')}\sum_{(I,F)\in\cps(T,T')}\pi(I)\pi(F)\>|\gamma(I,F)|\right\}\\
&\leq \max_{(T,T')}\left\{\frac1{\pi(T)P(T,T')} O(n^2) \pi(T) m\right\} =O(\lambdamin^{-2}n^2m^2)\\
\end{align*}

In order to apply  Lemma~\ref{lem:pathstoflows}
we must find an upper bound for
$\pi(\Omegabar)/\pi(\Omega_\emptyset)$ and
$\pi(\Omega_\emptyset)/\pi(\Omegabar)$.
Using the upper bound in Lemma~\ref{lem:weights},
$$
\frac{\pi(\Omegabar)}{\pi(\Omega_\emptyset)} = 
\frac{\sum_{A \in \Omegabar} \Lambda(A)}
{\sum_{A \in \Omega_\emptyset} \Lambda(A)}
= \frac
{\sum_{S:|S|=2} \hatZS{S}}
{\hatZS{\emptyset}}
= 
\frac
{\sum_{S:|S|=2} w_S \ZE{S}}
{w_\emptyset \ZE{\emptyset}}= O(n^2).
$$
Similarly, $\pi(\Omega_\emptyset)/\pi(\Omegabar) = O(1/n^2) = O(n^2)$. 

Now applying    Lemma~\ref{lem:pathstoflows}, there is a collection of flows~$\calF$ with 
$\rho(\calF)\leq O(n^2)\rho(\Gamma)=O(\lambdamin^{-2}n^4m^2)$.

In order to apply Lemma~\ref{lem:mixvscongestion} starting from state $T_0=\emptyset$ we need an upper bound for
$\ln({1}/{\pi(\emptyset)})$. For this we use
$$
\ln \left(\frac{1}{\pi(\emptyset)}\right) =
\ln \left(\frac{\hatZ}{ \Lambda(\emptyset)} \right) =
\ln(\hatZ).
$$
By the definition of~$\hatZ$ and~\eqref{eq:weights},
$$\ln(\hatZ) \leq \ln( n^2 \hatZS{\emptyset})
= \ln(n^2 \ZE{\emptyset}) \leq \ln(n^2 2^m) = O(m),
$$
where the asymptotic bound uses the fact that $G$ is connected.

Finally, by Lemma~\ref{lem:mixvscongestion},
$$\tmix{\emptyset}(\delta) \leq \rho(\calF) \left( O(m)+\ln\left(\frac{1}{\delta}\right)\right)=
O(\lambdamin^{-2}n^4m^2)\left( O(m)+\ln\left(\frac{1}{\delta}\right)\right).$$
\end{proof}

The following lemma captures how we will use Lemma~\ref{lem:mix}. 
 
\begin{lemma}\label{lem:mcmc}
There is an algorithm that takes as input an $n$-vertex connected graph $G=(V,E)$
with a weighting 
$(\blambda,w)\in \calW(G)$
and a set $S\subseteq V$ with $|S|=2$, also an accuracy parameter $\epsilon\in(0,1)$
and a desired failure probability~$\delta^*$.
With probability at  least $1-\delta^*$, the algorithm produces
as estimate 
$\hatR$ such that
$$e^{-\epsilon} \hatR \leq \frac{\hatZS{\emptyset} }{\hatZS{S} } \leq e^{\epsilon} \hatR.$$
Let $\lambdamin=\min_{e\in E}\lambda(e)$. 
The running time of the algorithm is at most 
a polynomial in~$n$, $1/\lambdamin$, $1/\epsilon$, 
and $\log(1/\delta^*)$.
\end{lemma}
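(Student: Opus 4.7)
The plan is a standard ratio-estimation Monte Carlo scheme built on the weighted worm process. I would simulate the chain from $A_0=\emptyset$ for $t$ steps, where $t$ is the mixing time guaranteed by Lemma~\ref{lem:mix} at a suitably small total-variation error $\delta$, to obtain an approximate sample from the stationary distribution $\pi$. Repeating this $N$ times independently yields samples $A_1,\ldots,A_N\in\Omega$. Letting $N_\emptyset$ and $N_S$ count how many of these lie in $\Omega_\emptyset$ and $\Omega_S$ respectively, the algorithm outputs $\hatR = N_\emptyset/N_S$. Since $\pi(\Omega_{S'}) = \hatZS{S'}/\hatZ$ for $|S'|\in\{0,2\}$, the ratio of empirical frequencies concentrates around $\hatZS{\emptyset}/\hatZS{S}$, which is exactly the target quantity.

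The key to efficiency is the hypothesis $(\blambda,w)\in\calW(G)$: by~\eqref{eq:weights}, every $\hatZS{S'}$ with $|S'|=2$ lies in the interval $[\tfrac12,2]\,\hatZS{\emptyset}$. Since the sum defining $\hatZ$ has $O(n^2)$ terms each comparable to $\hatZS{\emptyset}$, both $\pi(\Omega_\emptyset)$ and $\pi(\Omega_S)$ are $\Omega(1/n^2)$. A standard multiplicative Chernoff bound therefore shows that $N=O(n^2\epsilon^{-2})$ exact stationary samples suffice to estimate each of $\pi(\Omega_\emptyset)$ and $\pi(\Omega_S)$ to within a factor $e^{\pm\epsilon/2}$ with constant success probability, and hence to estimate their ratio to within $e^{\pm\epsilon}$.

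To upgrade constant success probability to $1-\delta^*$, I would run $k=O(\log(1/\delta^*))$ independent copies of the estimator and return the median, using the usual median-boosting argument. The only subtlety is that our samples are drawn not exactly from $\pi$ but from a distribution within total-variation distance $\delta$ of $\pi$; choosing $\delta=1/(8Nk)$ in Lemma~\ref{lem:mix} and applying a union bound ensures that the whole sequence behaves as if drawn from the true stationary distribution, so the Chernoff analysis goes through unchanged. By Lemma~\ref{lem:mix}, a single chain run costs $\tmix{\emptyset}(\delta) = O(\lambdamin^{-2}n^4m^2)(O(m)+\ln(Nk))$, giving an overall running time of $\mathrm{poly}(n,1/\lambdamin,1/\epsilon,\log(1/\delta^*))$ as required.

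The only real obstacle is bookkeeping: one must align the TV-mixing error with the Chernoff tolerance and guarantee that the denominator $N_S$ is bounded away from zero so that $\hatR$ is well-defined. Both are handled by taking $N$ a constant factor larger than the naive bound (so that $N\pi(\Omega_S)=\omega(1)$ and the event $N_S=0$ contributes negligible failure probability) and $\delta$ small enough that the accumulated TV error over all $Nk$ samples is at most a constant.
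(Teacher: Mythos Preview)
Your proposal is correct and follows essentially the same approach as the paper: run the weighted worm process to near-stationarity, count hits in $\Omega_\emptyset$ and $\Omega_S$, and output the ratio, relying on the $\calW(G)$ hypothesis to guarantee both target probabilities are $\Omega(1/n^2)$ so that Chernoff bounds apply with $O(n^2/\epsilon^2)$ samples. The only differences are bookkeeping: the paper absorbs the TV error $\delta$ directly into the probability estimates (taking $\delta=\epsilon/(32n^2)$ and bounding $\hat p_\emptyset$ multiplicatively by $p_\emptyset$) rather than via a coupling/union bound over all samples, and it achieves failure probability $\delta^*$ by scaling the sample count $T$ with $\ln(1/\delta^*)$ in a single Chernoff application rather than by median-boosting; both variants are standard and yield the same polynomial running-time guarantee.
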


\begin{proof} 
Let $\theta=\epsilon/8$, $\delta=\epsilon/(32 n^2)$ and
 $T= \lceil 
 \ln(6/\delta^*) e^{8 n^2 \delta} 12 n^2/\theta^2
 \rceil$.
 Let $\lambdamin=\min_{e\in E}\lambda(e)$.
 Let $t$ be the
upper bound on the mixing time 
$\tmix{\emptyset}(\delta)$ of the weighted worm process, from Lemma~\ref{lem:mix}.
Given the definition of~$\delta$, $t$ is at most a polynomial in~$n$, $1/\lambdamin$, and $\log(1/\epsilon)$.
For $i\in[T]$,
the algorithm will  
run the weighted worm process for $t$ steps, starting from the empty configuration,
computing
$x_i$, the indicator for the event that the output is in $\Omega_\emptyset$,
Similarly, for $i\in[T]$,
the algorithm will 
run the weighted worm process for $t$ steps, starting from the empty configuration, computing
$y_i$, the indicator for the event that the output is in $\Omega_S$. Let $x= \sum_{i=1}^T x_i$ and $y=\sum_{y=1}^T y_i$.
The output is then $\hatR =  x/y$.
The calculation of errors is standard.  
Let $p_\emptyset = \Lambda(\emptyset) = \hatZS{\emptyset}/\hatZ$.
Since $(\blambda,w)\in \calW(G)$, by the definition of $\calW(G)$,
we have the loose inequality $1/(4n^2) \leq p_\emptyset \leq 4/n^2$.
By the total variation distance guarantee of Lemma~\ref{lem:mix}, the probability $\hat{p}_\emptyset$ that $x_i=1$
satisfies 
$\hat{p}_\emptyset \leq p_\emptyset + \delta = (1+ \delta/p_\emptyset) p_\emptyset \leq e^{\delta/p_\emptyset} p_\emptyset \leq e^{4 n^2 \delta} p_\emptyset$
and
$\hat{p}_\emptyset \geq p_\emptyset-\delta = (1-\delta/p_\emptyset) p_\emptyset \geq e^{-2\delta/p_\emptyset} p_\emptyset \geq e^{-8 n^2\delta} p_\emptyset$.
Then by a Chernoff bound, for any $\theta\in (0,1)$,
$$\Pr(x \geq e^\theta T e^{4 n^2 \delta} p_\emptyset ) \leq
\Pr(x \geq (1+\theta) T \hat{p}_\emptyset) \leq 2 \exp(-\theta^2 \hat{p}_\emptyset T/3) \leq 
2 \exp(-\theta^2T /(e^{8 n^2\delta}12 n^2)). $$
Similarly, 
$$\Pr(x \leq e^{-2\theta} T e^{-8 n^2 \delta} p_\emptyset )
\leq \Pr( x \leq (1- \theta) T \hat{p}_\emptyset) \leq
\exp(-\theta^2 \hat{p}_\emptyset T/2)
\leq 
\exp(-\theta^2  
  T/(e^{8 n^2 \delta}  8 n^2)).$$
Similarly, with $p_S = \Lambda(S) = \hatZS{ S}/\hatZ$,
the probability that $y$  fails to satisfy
$  e^{-2\theta} e^{-8 n^2 \delta} p_S T \leq y \leq 
e^\theta   e^{4 n^2 \delta} p_S T$
is at most 
$ 3 \exp(-\theta^2T /(e^{8 n^2\delta}12 n^2))$.
The accuracy guarantee follows  from the choice of $\theta$ and $\delta$, which ensure that
  $e^{2\theta}   e^{8 n^2 \delta} = e^{\epsilon/2}$.

The failure probability guarantee comes from the fact that
$ 6 \exp(-\theta^2T /(e^{8 n^2\delta}12 n^2)) \leq \delta^*$. 
The worm process is simulated for $t$ steps $O(T)$ times, giving the
running time bound in the statement of the lemma. \end{proof}

\subsection{Learning appropriate weights for the worm process}\label{sec:learn}

Lemma~\ref{lem:mix}
shows that the weighted worm process is rapidly mixing as long as 
the weighting $(\blambda,w)$ is in  $\calW(G)$.
Let $G=(V,E)$ be a connected graph with $|V|=n$ and $|E|=m$.
Let $\blambda:E \to \ourrange\cap (0,1)$ be an edge weighting of~$G$.
 
In this section we show
how to learn a sequence  
$(\lambda^{[0]},w^{[0]}),\ldots,(\lambda^{[t]},w^{[t]})$ of weightings so that each weighting 
$(\lambda^{[i]},w^{[i]})$ is in $\calW(G)$.
The sequence will satisfy 
\begin{equation}\label{eq:lambdai}
\lambda^{[i]}(e) = \max(1/(1+\bd)^i,\lambda(e)),
\end{equation}
so taking $t = \max_{e\in E} \left\lceil \log(1/\lambda(e)) / \log(1+\bd)\right\rceil$, we have
$\lambda^{[t]} = \lambda$.
The results of the section are summarised in Lemma~\ref{lem:learnweights}.

Although the definition of $\lambda^{[i]}$, from Equation~\eqref{eq:lambdai}, is straightforward, the
definition of the subset weighting~$w^{[i]}$ is more complicated. 
In order to conform with the definition~\eqref{eq:weights} of  $\calW(G)$,
we will set $w^{[i]}_\emptyset=1$ for all~$i\in\{0,\ldots,t\}$.
Also, for sets $S$ with $|S|=1$, we set $w^{[i]}_S=0$.
This leaves the definition of $w^{[i]}_S$ where $|S|=2$.
For this, we start by defining the base case, which is $i=0$. Then, we  
show how to learn $w^{[i+1]}$   from $w^{[i]}$ by running the weighted worm process.
As quantified by Lemma~\ref{lem:learnweights},
there is a probability that the process does not converge sufficiently quickly to its stationary
distribution.
Thus, throughout this section we take $\delta$ to be the desired failure probability, from Lemma~\ref{lem:learnweights}.
We will give an algorithm which, with probability at least $1-\delta$, learns the weights.
We start by defining the base case. For every size-$2$ set  $S\subseteq V$,
we  set $w^{[0]}_S=1$.

\begin{observation}
 The weighting
$(\lambda^{[0]},w^{[0]})$ is in $\calW(G)$. 
\end{observation}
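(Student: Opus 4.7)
The plan is to verify the three clauses of~\eqref{eq:weights} in the definition of $\calW(G)$ for the specific pair $(\lambda^{[0]}, w^{[0]})$ just described. Two of these are immediate from the definition: $w^{[0]}_\emptyset = 1$ and $w^{[0]}_S = 0$ for every singleton $S$ are stipulated directly in the paragraphs preceding the observation. All the work is therefore in the third clause, namely the bound $\tfrac12 \leq \hatZS{S}/\hatZS{\emptyset} \leq 2$ for every size-$2$ subset $S \subseteq V$, where both partition functions are understood to be evaluated under the weighting $(\lambda^{[0]}, w^{[0]})$.

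First I would simplify this ratio. Since $\lambda(e) \in (0,1)$ for every $e$, the formula $\lambda^{[0]}(e) = \max(1/(1+\bd)^0, \lambda(e))$ collapses to $\lambda^{[0]}(e) = 1$ identically, so every subset has $\blambda(A) = 1$ and $\ZE{S}$ is simply the count $|\Omega_S|$. Combined with $w^{[0]}_\emptyset = w^{[0]}_S = 1$ (the case $|S|=2$), the ratio $\hatZS{S}/\hatZS{\emptyset}$ reduces to the purely combinatorial quantity $|\Omega_S|/|\Omega_\emptyset|$, and it suffices to show this lies in $[1/2,2]$.

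I would then prove the stronger statement that the ratio equals $1$, by exhibiting an explicit bijection. Given $S=\{s,t\}$, the connectedness of $G$ (a standing hypothesis of the section) provides a simple $s$--$t$ path $P \subseteq E$. The map $A \mapsto A \oplus P$ is an involution of $2^E$ whose effect on degree parities is to toggle exactly the parities of $s$ and~$t$; it therefore restricts to a bijection between $\Omega_\emptyset$ and $\Omega_{\{s,t\}}$, so $|\Omega_S| = |\Omega_\emptyset|$ and the ratio is exactly~$1$. There is no real obstacle here: once one notices that $\lambda^{[0]} \equiv 1$ trivialises the edge weight $\blambda(A)$, everything reduces to a standard edge-toggling bijection. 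The only thing to keep straight is that the condition~\eqref{eq:weights} involves $\hatZS{S}$ and not $\ZE{S}$, so the subset weight $w^{[0]}_S$ must be carried along; at $i=0$ it happens to equal~$1$ and drops out, but at subsequent stages of the learning procedure it will not.
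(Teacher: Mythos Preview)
Your proposal is correct and follows essentially the same approach as the paper: both arguments observe that $\lambda^{[0]}\equiv 1$ reduces the ratio $\hatZS{S}/\hatZS{\emptyset}$ to $|\Omega_S|/|\Omega_\emptyset|$, and then establish $|\Omega_S|=|\Omega_\emptyset|$ via the bijection $A\mapsto A\oplus P$ for a fixed $s$--$t$ path~$P$ in the connected graph~$G$. Your write-up is slightly more explicit in checking the first two clauses of~\eqref{eq:weights} and in explaining why the symmetric-difference map toggles exactly the parities at~$s$ and~$t$, but the substance is identical.
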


\begin{proof}
First note that, for every $e\in E$,
  $\lambda(e) \leq 1$ so  $\lambda^{[0]}(e)=1$.
  
 Consider any $S\subseteq V$ with 
 $|S|=2$ and note that
$$\frac{\hatZp{S}{G;\lambda^{[0]},w^{[0]}}}{\hatZp\emptyset{G;\lambda^{[0]},w^{[0]}}} = 
 \frac {\sum_{A \in \Omega_S}  1 }
{\sum_{A \in \Omega_\emptyset}  1 }.
$$
We will show below that  $|\Omega_{S}| = |\Omega_\emptyset|$.
This ensures 
that $(\lambda^{[0]},w^{[0]})$ satisfies   Equation~\eqref{eq:weights} so it is in $\calW(G)$
and the  observation follows.

To see that $|\Omega_{S}| = |\Omega_\emptyset|$, we establish a bijection~$\tau$ 
between $\Omega_S$ and $\Omega_\emptyset$.
Let $S=\{u,v\}$ and 
let $P$ be the set of edges in any fixed path from~$u$ to~$v$ in~$G$ --- such a path exists since $G$ is connected.
The bijection is straightforward. Given any 
  $A\in \Omega_S$, let $\tau(A) = A \oplus P$ and note that $\tau(A)\in \Omega_\emptyset$.
  \end{proof}

Now 
consider the weighting $(\lambda^{[i]},w^{[i]})$.
If $i<t$ then, for every size-$2$ subset~$S$ of~$V$,
we define  $w^{[i+1]}_S$ 
by running the weighted worm process,
as follows.
Set  $\epsilon=1/8$ and set $\delta^* = \delta/(n^2 t)$.  
Now run the
weighted worm process with weighting $(\lambda^{[i]},w^{[i]})$ to obtain 
(by Lemma~\ref{lem:mcmc})
an estimate $\hatR_S^{[i]}$ which, with probability at least $1-\delta^*$, satisfies

$$e^{-\epsilon} \hatR_S^{[i]} \leq \frac{
\hatZp{\emptyset}{G;\lambda^{[i]},w^{[i]}}   
  }{\hatZp{S}{G;\lambda^{[i]},w^{[i]}}   } \leq e^{\epsilon} \hatR_S^{[i]}.$$

In the proof of Lemma~\ref{lem:learnweights}, we will use Lemma~\ref{lem:mcmc} to
account for how long this run of the weighted worm process takes.
To conclude with the definition of $w^{[i+1]}_S$, let  $w_S^{[i+1]} = w_S^{[i]} \hatR_S^{[i]}$. 

\begin{lemma}
Assuming that 
the algorithm from Lemma~\ref{lem:mcmc} does not
fail when it is called to learn $\hatR_S^{0},\ldots,\hatR_S^{[i]}$,
 The weighting
$(\lambda^{[i+1]},w^{[i+1]})$ is in $\calW(G)$. 
\end{lemma}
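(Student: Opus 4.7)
The definitions of $w^{[i+1]}_\emptyset$ and $w^{[i+1]}_S$ for $|S|=1$ immediately give the first two conditions of~\eqref{eq:weights}. The real content is to verify, for every size-$2$ subset $S\subseteq V$, that
$$
\frac12\;\leq\; \frac{\hatZp{S}{G;\lambda^{[i+1]},w^{[i+1]}}}{\hatZp{\emptyset}{G;\lambda^{[i+1]},w^{[i+1]}}} \;\leq\; 2.
$$
My plan is to rewrite this ratio, using the identity $\hatZp{S}{G;\lambda,w}=w_S\,\ZE{S}$ (with edge weighting $\lambda$ on the right) and the recursive definition $w_S^{[i+1]}=w_S^{[i]}\hatR_S^{[i]}$, as
$$
w_S^{[i]}\,\hatR_S^{[i]}\cdot\frac{Z_S(G;\lambda^{[i+1]})}{Z_\emptyset(G;\lambda^{[i+1]})}
\;=\;\hatR_S^{[i]}\cdot\frac{\hatZp{S}{G;\lambda^{[i]},w^{[i]}}}{\hatZp{\emptyset}{G;\lambda^{[i]},w^{[i]}}}\cdot
\frac{Z_S(G;\lambda^{[i+1]})/Z_\emptyset(G;\lambda^{[i+1]})}{Z_S(G;\lambda^{[i]})/Z_\emptyset(G;\lambda^{[i]})}.
$$
By the accuracy guarantee for $\hatR_S^{[i]}$ provided by Lemma~\ref{lem:mcmc} (using $\epsilon=1/8$), the product of the first two factors lies in $[e^{-1/8},e^{1/8}]$. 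So the task reduces to bounding the last factor, which measures how much $Z_S/Z_\emptyset$ drifts when the edge weights move from $\lambda^{[i]}$ to $\lambda^{[i+1]}$.

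For the drift, I would argue termwise. From~\eqref{eq:lambdai}, on each edge $\lambda^{[i+1]}(e)/\lambda^{[i]}(e)\in[1/(1+\bd),1]$, and hence for every subgraph $A\subseteq E$ (with $|A|\leq m$) we have $\lambda^{[i+1]}(A)/\lambda^{[i]}(A)\in[(1+\bd)^{-m},1]$. Summing over $A\in\Omega_S$ and over $A\in\Omega_\emptyset$ separately gives the same range for each of the ratios $Z_S(G;\lambda^{[i+1]})/Z_S(G;\lambda^{[i]})$ and $Z_\emptyset(G;\lambda^{[i+1]})/Z_\emptyset(G;\lambda^{[i]})$, so their quotient lies in $[(1+\bd)^{-m},(1+\bd)^m]$. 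With $\bd=1/(2m)$, we have $(1+\bd)^m\leq e^{1/2}$, so the drift factor lies in $[e^{-1/2},e^{1/2}]$.

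Combining the two bounds, the full ratio lies in $[e^{-5/8},e^{5/8}]$, and since $e^{5/8}<2$ and $e^{-5/8}>1/2$ we conclude $(\lambda^{[i+1]},w^{[i+1]})\in\calW(G)$. The only nonroutine step is the edge-by-edge drift bound and choosing $\epsilon$ small enough to leave room against the $[1/2,2]$ window; this is what motivates the choice $\bd=1/(2m)$ and $\epsilon=1/8$ in the setup of the learning procedure.
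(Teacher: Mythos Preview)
Your proof is correct and follows essentially the same approach as the paper: both arguments combine the accuracy bound $e^{\pm\epsilon}$ on $\hatR_S^{[i]}$ with the per-edge drift bound $\lambda^{[i+1]}(e)/\lambda^{[i]}(e)\in[1/(1+\bd),1]$ to conclude that the ratio lies in $[e^{-5/8},e^{5/8}]\subset[1/2,2]$. Your factorisation into ``accuracy factor times drift factor'' is a slight repackaging of the paper's direct substitution, but the computation and constants are identical.
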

\begin{proof}
 Consider any $S\subseteq V$ with 
 $|S|=2$.
 Then
 \begin{eqnarray*}
 \frac{\hatZp{S}{G;\lambda^{[i+1]},w^{[i+1]}}}
 {\hatZp\emptyset{G;\lambda^{[i+1]},w^{[i+1]}}} &= 
  \frac {\sum_{A \in \Omega_S}  w^{[i+1]}_S \prod_{e\in A} \lambda^{[i+1]}(e)  \strut}
{\strut\sum_{A \in \Omega_\emptyset}  w^{[i+1]}_\emptyset \prod_{e\in A} \lambda^{[i+1]} (e) }\\
&= 
  \frac {w^{[i]}_S  \hatR_S^{[i]} \sum_{A \in \Omega_S}   \prod_{e\in A} \lambda^{[i+1]} (e) \strut}
{\strut   \sum_{A \in \Omega_\emptyset}   \prod_{e\in A} \lambda^{[i+1]} (e) }.
\end{eqnarray*}

Using the upper bound on  $\hatR_S^{[i]}$ and
$\frac{\lambda^{[i]}(e)}{(1+\bd)} \leq \lambda^{[i+1]}(e) \leq \lambda^{[i]}(e)$,
this
 quantity is at most 
\begin{align*}
   \frac {w^{[i]}_S 
   e^\acc \hatZp{\emptyset}{G;\lambda^{[i]},w^{[i]}}   
    \sum_{A \in \Omega_S}   \prod_{e\in A} \lambda^{[i]}(e) }
{  
\hatZp{S}{G;\lambda^{[i]},w^{[i]}}
\sum_{A \in \Omega_\emptyset}   \prod_{e\in A} \frac{\lambda^{[i]}(e)}{1+\bd} }
&\leq e^\acc {(1+\bd)}^m
  \frac {w^{[i]}_S 
    \hatZp\emptyset{G;\lambda^{[i]},w^i}   
    \sum_{A \in \Omega_S}   \prod_{e\in A} \lambda^{[i]}(e) }
{ 
\hatZp{S}{G;\lambda^{[i]},w^i}
\sum_{A \in \Omega_\emptyset}   \prod_{e\in A}  {\lambda^{[i]}(e)} }\\
&= e^\acc {(1+\bd)}^m \leq 2.\end{align*}
 Similarly, the quantity is at least~$1/2$.
\end{proof}

Collecting what we have done in this section, we get the following lemma.  
\begin{lemma}
\label{lem:learnweights} 
There is a randomised algorithm that takes as input
a connected graph
$G=(V,E)$   with $|V|=n$ and $|E|=m$ and an edge weighting
$\blambda:E \to \ourrange\cap (0,1)$.
The algorithm also takes a failure probability~$\delta$. With probability at least $1-\delta$, it computes
a subset weighting $w$ of $G$ 
so that $(\blambda,w)$ is in  $\calW(G)$.
Let $\lambdamin=\min_{e\in E}\lambda(e)$.
The running time of the algorithm is at most  a polynomial in $n$,   $1/\lambdamin$, and $\log(1/\delta)$.
\end{lemma}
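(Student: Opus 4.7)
\textbf{Proof plan for Lemma~\ref{lem:learnweights}.} The plan is to simply assemble the algorithm and analysis spelled out in Section~\ref{sec:learn}. First I would define $t = \max_{e\in E}\lceil \log(1/\lambda(e))/\log(1+1/(2m))\rceil$, which is $O(m\log(1/\lambdamin))$, and then follow the inductive construction of the sequence $(\lambda^{[0]},w^{[0]}),\ldots,(\lambda^{[t]},w^{[t]})$: the edge weightings $\lambda^{[i]}$ are given explicitly by \eqref{eq:lambdai}, and in particular $\lambda^{[t]}=\lambda$; the subset weightings $w^{[i]}$ are defined inductively, starting from $w^{[0]}\equiv 1$ (on size-$0$ and size-$2$ sets) and updating via $w^{[i+1]}_S = w^{[i]}_S \hatR_S^{[i]}$, where each $\hatR_S^{[i]}$ is obtained by invoking the algorithm of Lemma~\ref{lem:mcmc} on the weighting $(\lambda^{[i]},w^{[i]})$ with accuracy $\epsilon=1/8$ and per-call failure probability $\delta^*=\delta/(n^2 t)$. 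The output of the algorithm is $w=w^{[t]}$.

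Next I would verify correctness. The base case $(\lambda^{[0]},w^{[0]})\in\calW(G)$ and the inductive step $(\lambda^{[i+1]},w^{[i+1]})\in\calW(G)$ have already been established in the two observations/lemmas immediately preceding the statement, conditional on the event that none of the calls to Lemma~\ref{lem:mcmc} fail to achieve the required accuracy. The total number of calls is at most $t \cdot \binom{n}{2} = O(n^2 t)$, so by a union bound over all calls, the overall failure probability is at most $(n^2 t)\delta^* = \delta$, as required. Conditioned on no failure, induction on $i$ shows that $(\lambda^{[i]},w^{[i]})\in\calW(G)$ for each $i$, and in particular the output $w=w^{[t]}$ satisfies $(\blambda,w)\in\calW(G)$.

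Finally I would bound the running time. Each invocation of Lemma~\ref{lem:mcmc} is on the weighting $(\lambda^{[i]},w^{[i]})$ whose minimum edge-weight is at most $\lambdamin$ (by \eqref{eq:lambdai}), so each call runs in time polynomial in $n$, $1/\lambdamin$, $1/\epsilon = 8$, and $\log(1/\delta^*) = O(\log n + \log t + \log(1/\delta))$. Multiplying by the $O(n^2 t)$ calls and noting that $t = O(m\log(1/\lambdamin))$ is polynomial in $n$ and $1/\lambdamin$ (since $m \leq \binom{n}{2}$), the total running time is polynomial in $n$, $1/\lambdamin$, and $\log(1/\delta)$.

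The only real care needed in writing this up is the bookkeeping: propagating the per-call failure probability $\delta^*$ through the union bound, and checking that the time bound of Lemma~\ref{lem:mcmc}, which depends on $\lambda^{[i]}_{\min}$, is controlled uniformly in $i$ by $1/\lambdamin$. Neither poses a real obstacle, since $\lambda^{[i]}(e) \geq \lambda(e)$ by construction of the sequence.
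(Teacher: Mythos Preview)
Your proposal is correct and mirrors the paper's proof essentially verbatim: same definition of $t$, same inductive construction via Lemma~\ref{lem:mcmc} with $\epsilon=1/8$ and $\delta^*=\delta/(n^2 t)$, same union bound, and same running-time accounting via $\lambda^{[i]}(e)\geq\lambda(e)$. One small slip: you wrote that the minimum edge-weight of $\lambda^{[i]}$ is ``at most $\lambdamin$'', but you mean \emph{at least} $\lambdamin$ (which is exactly what gives $1/\lambda^{[i]}_{\min}\leq 1/\lambdamin$, as you correctly use in the last paragraph).
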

\begin{proof}

Let $t = \max_{e\in E} \left\lceil \log(1/\lambda(e)) / \log(1+\bd)\right\rceil$.
Note that $t = O(m  \log (1/\lambdamin))$.
The algorithm constructs the sequence $(\lambda^{[0]},w^{[0]}),\ldots,(\lambda^{[t]},w^{[t]})$ of weightings
as described in this section where $\lambda^{[t]} = \lambda$.

We just have to collect the failure probabilities and running times.
As note earlier, for $i\in \{0,\ldots,t-1\}$, for each size-$2$ subset~$S$ of~$V$,
we estimate $\hatR_S^{[i]}$ using the weighted worm process (Lemma~\ref{lem:mcmc}) with $\epsilon=1/8$
and specified failure probability $\delta^* = \delta/(n^2 t)$.

The running time  from Lemma~\ref{lem:mcmc} is at most 
a polynomial in~$n$, $1/\lambdaimin$,  
and $\log(1/\delta^*)$.
Recall that $\lambda^{[i]}(e) \geq \lambda(e)$.
Thus, the overall running time is at most a polynomial in $n$,   $1/\lambdamin$, and $\log(1/\delta)$.
By a union bound, the overall failure probability is at most~$\delta$.
\end{proof}

\section{The Proof of Theorem~\ref{thm:main}}\label{sec:proof}

\begin{thmmain}
\statethmmain
\end{thmmain}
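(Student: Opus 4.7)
The plan is to invoke Lemma~\ref{lem:worm} with $S=\{s,t\}$, which identifies $\Ex_{\piI}[\sigma(s)\sigma(t)]$ with the ratio $\ZE{\{s,t\}}/\ZE{\emptyset}$ in the associated even subgraphs model with edge weights $\lambda(e) = (\bbeta(e)-1)/(\bbeta(e)+1)$. Since $\bbeta(e) > 1$ we have $\lambda(e) \in (0,1)$, which is precisely the regime required by the weighted worm process machinery of Sections~\ref{sec:even}--\ref{sec:weightedworm}. The two sub-routines we need are already in place: Lemma~\ref{lem:learnweights} produces a subset weighting $w$ with $(\blambda,w) \in \calW(G)$, and Lemma~\ref{lem:mcmc} then estimates the ratio $\hatZp{\emptyset}{G;\blambda,w}/\hatZp{\{s,t\}}{G;\blambda,w}$ within a multiplicative factor $e^{\pm\epsilon}$.

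Before wiring these together I would carry out two short preprocessing steps. First, if $s$ and $t$ lie in different connected components of $G$ then $\sigma(s)$ and $\sigma(t)$ are independent under $\piI$ and each has mean zero, so the covariance is exactly zero and the algorithm returns~$0$; otherwise we restrict attention to the component containing $s$ and $t$, using the fact that $\piI$ factorises across components, and thereby reduce to the case in which $G$ is connected. Second, to confirm that the running-time parameter $1/\lambdamin$ is polynomial in the input size, I observe that when $\bbeta(e) = 1 + P(e)/Q(e)$ with $P(e),Q(e)$ positive integers given in unary, one has $\lambda(e) = P(e)/(P(e) + 2Q(e)) \geq 1/(1 + 2Q(e))$, so $1/\lambdamin \leq 1 + 2\max_e Q(e)$, which is polynomial in $|x|$.

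With these in hand, the algorithm first calls Lemma~\ref{lem:learnweights} with failure probability $1/8$ to produce $w$, and then calls Lemma~\ref{lem:mcmc} with $S=\{s,t\}$, accuracy $\epsilon$, and failure probability $1/8$ to produce an estimate $\hatR$. Since $w_\emptyset = 1$, the guarantee of Lemma~\ref{lem:mcmc} unpacks to
$$e^{-\epsilon}\hatR \;\leq\; \frac{\ZE{\emptyset}}{w_{\{s,t\}}\ZE{\{s,t\}}} \;\leq\; e^{\epsilon}\hatR,$$
so the algorithm outputs $1/(w_{\{s,t\}}\hatR)$ as an $e^\epsilon$-multiplicative estimate of $\ZE{\{s,t\}}/\ZE{\emptyset} = \Ex_{\piI}[\sigma(s)\sigma(t)]$. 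A union bound gives total failure probability at most $1/4$, matching the FPRAS definition, and combining the polynomial running-time guarantees of the two lemmas with the bound on $1/\lambdamin$ yields an overall running time polynomial in $|x|$ and $\epsilon^{-1}$.

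The substantive obstacle — rapid mixing of the worm process when the ``balancing'' subset weighting~$w$ is not known in advance and must be learned incrementally along a schedule $\lambda^{[0]},\ldots,\lambda^{[t]}$ — has already been resolved by the canonical-path analysis behind Lemma~\ref{lem:mix} and the simulated-annealing style learning argument of Lemma~\ref{lem:learnweights}. What remains in the proof of Theorem~\ref{thm:main} is essentially the bookkeeping above: invoking the two lemmas, apportioning the failure probabilities, and verifying that $1/\lambdamin$ is polynomial in the input size.
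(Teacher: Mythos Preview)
Your proposal is correct and follows essentially the same approach as the paper: reduce to a connected graph, invoke Lemma~\ref{lem:learnweights} to learn a good subset weighting and then Lemma~\ref{lem:mcmc} to estimate the ratio, combine via Lemma~\ref{lem:worm}, and check that $1/\lambdamin$ is polynomial in the unary input. The only cosmetic difference is that the paper threads through an arbitrary failure probability~$\delta$ (splitting it as $\delta/2$ between the two subroutines) whereas you fix $1/8$ for each to hit the $3/4$ threshold directly; both are fine.
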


\begin{proof}
We start by reviewing what are the inputs and outputs of an FPRAS for $\FerroIsingCorr$.

The input consists of  
an input to $\FerroIsingCorr$,
 an accuracy parameter $\epsilon\in(0,1)$, and
 a failure probability $\delta\in (0,1)$.
An input to $\FerroIsingCorr$ consists of 
 a   graph $G=(V,E)$  with 
specified vertices $s$ and $t$ and an edge weighting $\bbeta:E\to \ferrorange$ of $G$.
Let $n=|V|$.
We need to be more specific about how the edge weighting~$\beta$ is represented.
Recall from the introduction that
each weight $\beta(e)$ satisfies $\beta(e)>1$ and  is represented in the input by two
positive integers $P(e)$ and $Q(e)$ (specified in unary in the input)
such that  
 and $\beta(e) = 1+P(e)/Q(e)$.

With probability at least $1-\delta$, the
output $\widehat{C}$ of the FPRAS should satisfy
\begin{equation}
\label{finalgoal} e^{-\epsilon} \widehat{C} \leq \Ex_{\piI}[\sigma(s)\sigma(t)] \leq e^{\epsilon} \widehat{C}.\end{equation}

Finally, in order to be an FPRAS, the running time should be at
most  a polynomial in
$n$,  $\sum_{e\in E} (P(e)+Q(e))$, $1/\epsilon$, and
$\log(1/\delta)$.

 If $s$ and $t$ are in different connected components of~$G$ then
$\Ex_{\piI}[\sigma(s)\sigma(t)]$=0, so  we can just output~$0$ in this case.
If $s$ and $t$ are in the same connected component, $G'$, of~$G$
then $\Ex_{\piI}[\sigma(s)\sigma(t)] = \Ex_{\piIprime}[\sigma(s)\sigma(t)] $.
So assume without loss of generality, for the rest of the proof, that $G$ is connected.

 Now  let $\blambda$ be the edge-weighting of~$G$ defined by
$\blambda(e)=(\bbeta(e)-1)/(\bbeta(e)+1)$.  
Let $\lambdamin=\min_{e\in E}\lambda(e)$.

The FPRAS should first
run the algorithm of Lemma~\ref{lem:learnweights} with input $G$, $\blambda$ and $\delta/2$.
With probability at least $1-\delta/2$, it 
computes
a subset weighting $w$ of $G$ 
so that $(\blambda,w)$ is in  $\calW(G)$.

Let $S=\{s,t\}$.
Suppose that the algorithm of  Lemma~\ref{lem:learnweights}
has succeeded. Recall from Lemma~\ref{lem:worm}
that $ \Ex_{\piI}[\sigma(s)\sigma(t)] =\frac{\ZE{S}}{\ZE\emptyset}$.
Also, by plugging in the definitions of $\hatZS{S}$ (from the beginning of Section~\ref{sec:weightedworm})
and $\ZE{S}$ (from the beginning of Section~\ref{sec:even}) we have
$$\Ex_{\piI}[\sigma(s)\sigma(t)] =\frac{\ZE{S}}{\ZE\emptyset} = \frac{\hatZS{ S} }{w_S\hatZS{\emptyset} }.$$
Since we already know $w_S$, the goal is to compute a quantity~$\hatQ$
such that
$$e^{-\epsilon} \hatQ \leq \frac{\hatZS{ S} }{\hatZS{\emptyset} } \leq e^{\epsilon} \hatQ.$$
Then we  satisfy~\eqref{finalgoal} by taking $\widehat{C} = \hatQ/w_S$.

The estimate $\hatQ$ can be obatined  by  running
the algorithm of Lemma~\ref{lem:mcmc}
with input $G$, weighting $(\blambda,w)$, set $S = \{s,t\}$,
accuracy parameter $\epsilon$,   desired failure probability $\delta^*=\delta/2$,
letting $\hatR$ be the output of the algorithm, and taking  $\hatQ = 1/\hatR$.

The running time of both algorithms is   at most a polynomial in
$n$, $\log(1/\delta)$, $1/\epsilon$ and  
$$1/\lambdamin = \max_{e\in E} \frac{\beta(e)+1}{\beta(e)-1} 
=\max_{e\in E} \left\{1 + \frac{2}{\beta(e)-1}\right\}
=\max_{e\in E} \left\{1 + \frac{2 Q(e)}{ P(e)}\right\}  .$$
\end{proof} 
 
\begin{remark}\label{rem:unarybinary}
It is possible to improve the run-time of the algorithm of Theorem~\ref{thm:main}
so that the dependence on 
$1/\lambdamin =  \max_{e\in E} \left\{1 + \frac{2}{\beta(e)-1}\right\}$ is logarithmic, rather than polynomial.
To do this, we pre-process the graph~$G$.
If an edge~$e$ has a weight $\beta(e)$ that is very close to~$1$
then it is replaced with a subgraph~$J$. The weights of the edges of $J$ are constants
bounded above~$1$, but the  overall effect of~$J$ is to simulate the weight $\beta(e)$ with exponential precision.
The technical details of the simulation are  very similar to what we   do in Lemma~\ref{lem:J}
of Section~\ref{sec:antiferro}.
We omit the details since   polynomial-time algorithms 
(as opposed to strongly polynomial-time algorithms) are sufficient for our purposes.
 \end{remark}

\section{The Antiferromagnetic Case}\label{sec:antiferro}

Consider the  antiferromagnetic Ising model  
on a graph $G=(V,E)$ 
as defined in Section~\ref{sec:intro}.
We will consider the situation where,
for some $\myb \in (0,1)$,
$G=(V,E)$ is a graph with an edge weighting~$\bbeta$
that assigns value $\bbeta(e) = \myb$ to every $e\in E$.
We will simplify the notation by  
defining 
$$\wtMonoI(\sigma)=  \myb^{ | \{ e=\{u,v\}\in E: \sigma(u)=\sigma(v) \}| } $$
with the corresponding partition function 
$\ZMonoI=\sum_{\sigma:V\to\{-1,+1\}}\wtMonoI(\sigma)$
and Gibbs distribution
$\piMonoI{G}(\cdot)$.
We will be interested in the following computational problem, with parameter $\myb\in(0,1)$.

\prob
{$\Correlation$.} 
{ A   graph $G$ with specified vertices $s$ and $t$} 
{ A correct statement of the form  ``$\Ex_{\piMonoI{G}}[\sigma(s)\sigma(t)]\geq 0$''
or ``$\Ex_{\piMonoI{G}}[\sigma(s)\sigma(t)]\leq 0$''.}

The purpose of this section is to prove Theorem~\ref{thm:anti}
which  states that, for any $\myb\in(0,1)$, $\Correlation$ is \#P-hard.
Note that the
\#P-hardness does not come from the difficulty of  
determining whether or not 
$\Ex_{\piMonoI{G}}[\sigma(s)\sigma(t)] $  is zero --- 
an algorithm for $\Correlation$ is allowed to give either answer in this case.
Theorem~\ref{thm:anti}
implies that it is \#P-hard to approximate 
 $\Ex_{\piMonoI{G}}[\sigma(s)\sigma(t)] $
 within any specified factor, since such an approximation would allow
 one to determine  either
 $\Ex_{\piMonoI{G}}[\sigma(s)\sigma(t)]\geq 0$
or $\Ex_{\piMonoI{G}}[\sigma(s)\sigma(t)]\leq 0$.

We start with some notation.
Given vertices~$s$ and~$t$ of~$G$, 
let $\Psi_{s+,t+}$ be the set of assignments 
$\sigma\colon V\to\{-1,+1\}$ that satisfy $\sigma(s)=+1$ and $\sigma(t)=+1$. Let
$$\ZIstGb{s+}{t+}{G}{\sbeta}  = \sum_{\sigma \in \Psi_{s+,t+} }
\wt(\sigma).$$
Define the sets of assignments
$\Psi_{s+,t-}$, $\Psi_{s-,t+}$, $\Psi_{s-,t-}$
and the partition functions
$\ZIstGb{s+}{t-}{G}{\sbeta} $,
$ \ZIstGb{s-}{t+}{G}{\sbeta} $, and
$\ZIstGb{s-}{t-}{G}{\sbeta}$  similarly.
We use the following notation of implementation.
\begin{definition}
A graph $J$ is said to \emph{$\myb$-implement}
a rational number $\myb'$  
if there are vertices $s$ and $t$ of $J$ such that
 $
 {\ZIstGb{s+}{t+}{J}{\sbeta}}/{\ZIstGb{s+}{t-}{G}{\sbeta}}  = \myb'   $.
We call $s$ and $t$ the terminals of $J$.
\end{definition}
We will use the following lemma for implementations.

 \begin{lemma}\label{lem:J}
Fix $\myb\in (0,1)$. 
There is a polynomial-time algorithm that takes as input 
\begin{itemize}
\item A  positive integer~$n$, in unary,
\item  a target-edge weight  $\myb'\in [\myb^n,\myb^{-n}] $, and
\item    a rational accuracy parameter~$\acc\in (0,1)$, in binary.
\end{itemize}
The algorithm produces a graph $J$ with 
terminals $s$ and $t$ that $b$-implements 
a value $\goal$ satisfying
$|\goal - \myb'| \leq \acc$. 
The  size of~$J$
  is at most a polynomial in $n$ and $\log(1/\acc)$, independently of~$\myb'$. 
    \end{lemma}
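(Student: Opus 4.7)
The plan is to construct $J$ by iterated series and parallel compositions of single edges of weight $b$, following the style of gadget construction developed in \cite{sign}. Recall the two composition rules, each verified by summing the partition function over the spin of the shared internal vertex: parallel composition of sub-gadgets of effective weights $\beta_1,\beta_2$ yields effective weight $\beta_1\beta_2$, and series composition yields $(\beta_1\beta_2+1)/(\beta_1+\beta_2)$.

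I would carry out the construction in two phases. In the coarse phase I produce a sub-gadget whose effective weight is within a constant factor of $b'$ (where the constant depends only on~$b$). For $b'\in[b^n,1]$ this is achieved by taking $k=\lfloor\log b'/\log b\rfloor$ parallel copies of a single $b$-edge, giving weight $b^k$; this uses $O(n)$ edges. For $b'\in[1,b^{-n}]$ I first form the composite edge $H$ consisting of two $b$-edges in series, of effective weight $\hat b = (b^2+1)/(2b)>1$, and then take $\lfloor \log b'/\log \hat b\rfloor$ parallel copies of $H$; since $\hat b$ depends only on $b$, this again uses $O(n)$ edges. In both cases the resulting coarse weight differs from $b'$ by only a constant factor, so the ``residual'' ratio lies in a bounded window $[c^{-1},c]\subset(0,\infty)$ with $c$ depending only on~$b$.

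In the refinement phase I extend the coarse gadget one edge at a time, at each step appending a single $b$-edge either in series ($\beta\mapsto(b\beta+1)/(\beta+b)$) or in parallel ($\beta\mapsto b\beta$), choosing whichever operation moves the running effective weight closer to~$b'$. Working in the parameter $y=(\beta-1)/(\beta+1)\in(-1,1)$, parallel composition acts as the M\"obius map $y\mapsto(y+y_b)/(1+y_by)$ with $y_b=(b-1)/(b+1)$, while series composition acts as the linear map $y\mapsto y_b\,y$; both are strict contractions on any compact sub-interval of $(-1,1)$, with contraction factor bounded away from~$1$ by a constant depending only on~$b$. A standard iterated-function-system density argument then shows that iterates of these two maps are $\epsilon$-dense in the relevant window after $O(\log(1/\epsilon))$ compositions, so $O(\log(1/\epsilon))$ additional edges suffice to bring the running effective weight within~$\epsilon$ of~$b'$.

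The main obstacle I expect is quantifying the contraction uniformly: both maps contract only weakly near the boundary $|y|=1$, which corresponds to the extremes of $[b^n,b^{-n}]$. I would handle this by arranging the coarse phase so that, after factoring out the coarse weight, the remaining ratio to be simulated lies in the bounded window $[c^{-1},c]$ (so the corresponding $y$-value is bounded away from~$\pm1$), and then showing that this window is forward-invariant under both composition operations up to a single-step overshoot that is immediately corrected by the greedy rule. The remaining bookkeeping -- tracking the running weight as a rational of polynomial bit-complexity and producing $J$ in polynomial time -- is routine, and combining the two phases yields a gadget of size $O(n+\log(1/\epsilon))$ as required.
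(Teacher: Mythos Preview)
Your coarse phase is fine and essentially matches the paper's use of $O(n)$ parallel copies of $P_2$ or $P_3$. The gap is in the refinement phase.

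First, the contraction claim is false as stated. In the $y$-coordinate the parallel map $f_1(y)=(y+y_b)/(1+y_by)$ has derivative $(1-y_b^2)/(1+y_by)^2$, which equals $1/b>1$ at $y=1$; so $f_1$ is an \emph{expansion} on compact sub-intervals of $(-1,1)$ that extend far enough to the right, and your window $[c^{-1},c]$ is not forward-invariant under~$f_1$. More seriously, the greedy rule ``pick whichever single-edge move lands closer to $b'$'' can fail outright. Take $b=\tfrac12$ and target $b'=1.24$. From the coarse value $1$ the only non-trivial first move is parallel, giving $0.5$; then series gives $g(0.5)=1.25$, within $0.01$ of the target. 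But from $1.25$ \emph{both} moves overshoot: series gives $g(1.25)\approx0.929$ and parallel gives $0.625$. Greedy picks series; thereafter greedy keeps choosing series (parallel always moves the wrong way once $\beta<b'$), and since series is $y\mapsto y_b y$ the iterates converge to $y=0$, i.e.\ $\beta=1$, not to~$1.24$. So greedy does not converge to the target, and the one-step overshoot is not self-correcting. The separate ``IFS density'' claim, even if it held, would only give \emph{existence} of a short word in $\{f_1,f_2\}$ hitting the target; it does not validate the greedy search.

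The paper's fix is exactly to avoid this trap: rather than alternating single-edge series/parallel moves, it builds a family of building blocks $P_\ell$ (paths of length $\ell$, i.e.\ $\ell$-fold series compositions) whose effective weights $\zeta_\ell=1+2/(c^\ell-1)$ approach $1$ geometrically from above (even~$\ell$) and below (odd~$\ell$), and then combines them \emph{only in parallel} via a positional greedy: for each $j=2,3,\ldots,2L{+}1$ take the maximal $d_j$ with $\zeta_j^{d_j}$ not overshooting the residual. Because $\zeta_{j}^{d}$ dominates $\zeta_{j-2}$ for a bounded~$d$, one gets $d_j=O(1)$ for $j\ge4$ and $d_2,d_3=O(n)$, yielding size polynomial in $n$ and $\log(1/\epsilon)$. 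Your refinement can be repaired along the same lines by replacing ``append one $b$-edge in series or parallel'' with ``append a path $P_\ell$ in parallel, for increasing~$\ell$''.
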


 \begin{proof} 
If $\myb'=1$ then $J$ is the graph with vertices $s$ and $t$ and no edges.
So suppose $\myb'\neq 1$.

Let $P_\ell$ be an $\ell$-edge path with endpoints $s$ and $t$.
Let $f_\ell = 
\ZIstGb{s+}{t+}{P_\ell}{\sbeta}$ and 
$a_\ell = 
\ZIstGb{s+}{t-}{P_\ell}{\sbeta}$. Then $f_1=\sbeta$, $a_1=1$
and we have the  system
$f_\ell = \sbeta f_{\ell-1} + a_{\ell-1}$
and $a_\ell = f_{\ell-1} + \sbeta a_{\ell-1}$.
Thus, $f_\ell = (1/2) ((\sbeta+1)^\ell + (\sbeta-1)^\ell)$ and
$a_\ell = (1/2) ((\sbeta+1)^\ell - (\sbeta-1)^\ell)$.
 Let $\zeta_\ell = f_\ell/a_\ell$. Then
$$  \zeta_\ell = 
\frac{f_\ell}{a_\ell} 
= 
{ \frac{(\sbeta+1)^\ell + (\sbeta-1)^\ell}
{(\sbeta+1)^\ell - (\sbeta-1)^\ell} } = 
  { 1 + \frac{2  }{    c^\ell -  1} },$$  
where $c=( \sbeta+1)/(\sbeta-1) <-1 $.  
Note that for odd~$\ell$ the values of $\zeta_\ell$ are in $(0,1)$ and they increase.
Also, for even~$\ell$ the values of $\zeta_\ell$ are greater than~$1$ and they decrease.
  
The graph $J$ is constructed by combining  copies of $P_\ell$ (for different values of~$\ell$), identifying
the vertex $s$ in all copies, and identifying the vertex $t$ in all copies.  
Let  
$$L = \left\lceil \frac{\log\left(\frac{2 }{b^n\acc} + 1\right)}{\log(c^2)} \right\rceil .$$
We will use paths $P_\ell$ with $\ell \leq 2 L + 1$.
The value of $L$ is defined so that $L$ is 
at most a polynomial in 
$n$ and
$\log(1/\acc)$, as required in the statement of the lemma, and
also  
\begin{equation}\label{eq:generic}
c^{2L} \geq {2 b^{-n}}/{\acc} + 1.
\end{equation}
  
The graph $J$ is constructed as follows.
\begin{itemize}
\item If $\myb'>1$: 
Set $B_0=\myb'>1$.
For odd $j\in \{2,\ldots,2L+1\}$, let   $d_j=0$.
For even  $j\in \{2,\ldots,2L+1\}$, 
let $d_j$ be the largest non-negative integer 
such that $\zeta_j^{d_j} \leq  B_{j-2}$
and let $B_j = B_{j-2}/\zeta_j^{d_j}\geq 1$.

\item If $\myb'<1$:
Set $B_1=\myb'<1$.
For even $j\in \{2,\ldots,2L+1\}$, let   $d_j=0$.
For odd  $j\in \{2,\ldots,2L+1\}$, 
let $d_j$ be the largest non-negative integer  
such that $\zeta_j^{d_j} \geq B_{j-2}$
and let $B_j = B_{j-2}/\zeta_j^{d_j}\leq 1$.
\end{itemize}  
The graph $J$ is constructed by taking $d_j$ copies of $P_j$ for $j\in \{2,\ldots,2L+1\}$,
identifying the vertex~$s$ 
in all copies and identifying the vertex~$t$ in all copies.
The value that $J$ $b$-implements is
$$
\goal = \frac{\ZIstGb{s+}{t+}{J}{\sbeta}}{
\ZIstGb{s+}{t-}{J}{\sbeta}} = \prod_{j=2}^{2L+1} \zeta_j^{d_j}$$
  
We next show that $|\goal - \myb'| \leq \acc$.
\begin{itemize}
\item If $\myb'>1$: The construction guarantees
$ \frac{\myb'}{\zeta_{2L}}\leq  \goal \leq \myb' $.
But~\eqref{eq:generic} implies
 $\zeta_{2L} \leq 1 + \frac{\acc}{b^{-n}}\leq 1 + \frac{\acc}{\myb'} \leq \frac{1}{1-\frac{\acc}{\myb'}}$,
  so   $\myb'-\acc \leq \myb'/\zeta_{2L}$.

\item If $\myb'<1$: The construction guarantees
$\myb' \leq  \goal \leq \frac{\myb'}{\zeta_{2L+1}} $.
But~\eqref{eq:generic} implies
${|c|}^{2L+1} \geq c^{2L} \geq {2 b^{-n}}/{\acc} + 1$
so 
$\zeta_{2L+1} \geq 1 - \frac{2}{{|c|}^{2L+1}+1}\geq 1- \frac{\epsilon}{b^{-n}+\epsilon} \geq 1-\frac{\epsilon}{b'+\epsilon}=\frac{b'}{b'+\epsilon}$,
so
$ {\myb'}/{\zeta_{2L+1}} \leq \myb'+\epsilon$.
 
\end{itemize}

To finish the bound on the size of $J$, we will show that $d_2$ and $d_3$ are $O(n)$
and that, for every $j\in \{4,\ldots, 2L+1\}$, $d_j = O(1)$.
First, $d_2 \leq \log_{\zeta_2}(\myb')$
 where $\zeta_2 = (b^2+1)/(2 b)$.
Also, $d_3 \leq  \log_{1/\zeta_3} (1/\myb')$ where
$1/\zeta_3 =  (1+3b^2)/(b(3+b^2))$.
   
Finally, let
$d = \lceil {c}^4/({c}^2-1)\rceil $.  
Note that we could replace ``$c$'' with ``$|c|$'' in the definition of~$d$ without changing the definition,
so, plugging the definition in, we find, for $j\geq 4$, that
${|c|}^{j}-1 \leq d ({|c|}^{j-2}-1)$.
This implies
\begin{equation}
\label{eq:last}
{\left(1+\frac{2}{|c|^j-1}\right)}^d \geq 1 + \frac{2d}{|c|^j-1} \geq 1+ \frac{2}{|c|^{j-2}-1}.
\end{equation}
If $j\geq 4$ is even then~\eqref{eq:last}
implies
$\zeta_j^d \geq \zeta_{j-2}$, so $d_j \leq d$.
If $j\geq 4$ is odd then~\eqref{eq:last}
gives 
$${\left(\frac{|c|^j-1}{|c|^j+1}
\right)}^d \leq \frac{|c|^{j-2}-1}{|c|^{j-2}+1},$$
and all numerators and denominators are negative, so multiplying them by $-1$
we get
$${\left(\frac{c^j+1}{c^j-1}
\right)}^d \leq \frac{c^{j-2}+1}{c^{j-2}-1},$$
so $\zeta_j^d \leq \zeta_{j-2}$ and $d_j \leq d$.
\end{proof}

 \begin{theorem}\label{thm:anti}
 Let $\myb \in (0,1)$ 
 be a rational number. 
 Then $\Correlation$ is  $\mathrm{\#P}$-hard. \end{theorem}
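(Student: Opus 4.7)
The plan is to reduce from an exact counting problem that is known to be $\#P$-hard, using the $b$-implementation gadget of Lemma~\ref{lem:J} as the central tool, together with a binary search to extract exact partition-function ratios from sign queries.

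First, I would observe that by the $\mathbb{Z}_2$ spin-flip symmetry we have $\ZIstGb{s+}{t+}{G}{b} = \ZIstGb{s-}{t-}{G}{b}$ and $\ZIstGb{s+}{t-}{G}{b} = \ZIstGb{s-}{t+}{G}{b}$, so the sign of $\Ex_{\piMonoI{G}}[\sigma(s)\sigma(t)]$ agrees with the sign of $D_G := \ZIstGb{s+}{t+}{G}{b} - \ZIstGb{s+}{t-}{G}{b}$. Hence an oracle for $\Correlation$ is equivalent to a sign oracle for $D_G$.

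Next, given an arbitrary instance $(H,s,t)$, let $R_H := \ZIstGb{s+}{t+}{H}{b}/\ZIstGb{s+}{t-}{H}{b}$. I would form $G$ by gluing $H$ to a gadget $J$ produced by Lemma~\ref{lem:J} that $b$-implements a target value $\myb'$, identifying the two terminals of $J$ with $s$ and $t$. Since $H$ and $J$ share only $s$ and $t$, the weights factor, $\ZIstGb{s\tau}{t\tau'}{G}{b} = \ZIstGb{s\tau}{t\tau'}{H}{b}\,\ZIstGb{s\tau}{t\tau'}{J}{b}$, and Lemma~\ref{lem:J} gives $\ZIstGb{s+}{t+}{J}{b}/\ZIstGb{s+}{t-}{J}{b} \approx \myb'$. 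Consequently the sign of $D_G$ equals the sign of $R_H\myb' - 1$ (up to controllable slack from the approximation). Since $R_H$ is a ratio of two polynomially-bounded sums of powers of the fixed rational $b$, it is a rational number whose numerator and denominator have polynomially many bits, so a polynomial number of sign queries, combined with invocations of Lemma~\ref{lem:J} to realize carefully chosen $\myb'$, suffices to identify $R_H$ exactly by binary search.

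It remains to show that exact computation of $R_H$ is $\#P$-hard. Starting from an arbitrary graph $H_0$ with two distinguished vertices $v_0, v_1$, I would form $H$ by attaching a fresh vertex $s$ to $v_0$ by a single edge and a fresh vertex $t$ to $v_1$ by a single edge. A short case analysis on the spins of $v_0, v_1$ gives $R_H = \big((b^2+1)Z^{++} + 2bZ^{+-}\big)/\big(2bZ^{++}+(b^2+1)Z^{+-}\big)$, where $Z^{\alpha\beta}$ is the partial partition function of $H_0$ with $\sigma(v_0)=\alpha$, $\sigma(v_1)=\beta$; this Möbius transformation is invertible for $b\neq 1$, so knowing $R_H$ yields $q := Z^{++}/Z^{+-}$ exactly. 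A standard deletion--contraction style argument, together with the ability to attach further $b$-implementation gadgets of arbitrary prescribed weight between any pair of vertices, lets us read off enough such ratios to recover $\ZMonoIG{H_0}$, whose exact computation is $\#P$-hard (from the Tutte-plane hardness results of Vertigan and Welsh \cite{VertiganWelsh} and the related Ising hardness established in \cite{JS93}).

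The main obstacle I anticipate is the final step: converting a family of exact ratio queries into a bona fide reduction from a well-known $\#P$-hard problem. The binary-search and gadget construction are essentially mechanical given Lemma~\ref{lem:J}, but the combinatorial bookkeeping required to stitch together ratios of the form $Z^{++}/Z^{+-}$ into $\ZMonoIG{H_0}$ (or directly into a hardness source such as \textsc{MaxCut}) needs care, particularly because the ratios must be chosen so that the precision demanded by Lemma~\ref{lem:J} remains polynomial in the input size.
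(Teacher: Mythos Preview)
Your core mechanism---use Lemma~\ref{lem:J} to $b$-implement a target weight, glue it at $s,t$, and observe that the sign of the correlation in the composite graph equals the sign of $R_H\hat b-1$, then binary-search---is exactly what the paper does. Where you diverge is in the ``stitching'' step that you yourself flag as the obstacle, and this is where your proposal is genuinely incomplete: the phrase ``a standard deletion--contraction style argument, together with the ability to attach further $b$-implementation gadgets \ldots\ lets us read off enough such ratios to recover $\ZMonoIG{H_0}$'' is not a proof. It is unclear what deletion--contraction means here for the Ising partition function, which ratios you intend to extract, and how they combine to give $\ZMonoIG{H_0}$; the pendant-vertex/M\"obius step that precedes it yields one ratio $Z^{++}/Z^{+-}$ for one pair $v_0,v_1$, and you have not said how to get from a single such ratio (or even a family of them) to the full partition function.

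The paper resolves exactly this obstacle by a clean telescoping. Order the edges $e_1,\dots,e_m$ of the input graph and let $G_j$ have edge set $\{e_1,\dots,e_j\}$. Writing $\nu_j=\ZIstGb{s_j+}{t_j-}{G_{j-1}}{b}/\ZIstGb{s_j+}{t_j+}{G_{j-1}}{b}$ and $\alpha_j=(b+\nu_j)/(1+\nu_j)$, one has $\ZMonoIG{G_j}=\alpha_j\,\ZMonoIG{G_{j-1}}$ and hence $\ZMonoI=2^n\prod_{j=1}^m\alpha_j$. For each $j$, your binary-search idea is applied not to an auxiliary pendant construction but directly to $G_{j-1}$ with the gadget $J$ attached at the endpoints of $e_j$: the sign of the correlation at $s_j,t_j$ in this graph is the sign of $\hat b-\nu_j$, so the oracle halves an interval containing $\nu_j$. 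Crucially, one does \emph{not} need $\nu_j$ exactly; additive precision of order $b^{O(m)}/m$ in each $\nu_j$ suffices, because $\ZMonoI$ is a polynomial in $b=p/q$ with bounded integer coefficients and hence any two candidate values differ by at least $q^{-m}$, so a sufficiently tight multiplicative estimate can be rounded to the exact answer. This replaces your unspecified recombination step with a single product of $m$ easily-controlled factors and keeps the precision demanded of Lemma~\ref{lem:J} polynomial throughout.
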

 
\begin{proof}

Fix $\myb\in (0,1)$.
We will show
how to use an oracle for $\Correlation$
to give a polynomial time algorithm for exactly 
computing $\ZMonoI$, a problem that is known to be \#P-hard
(see \cite[Theorem 14]{JS93} for \#P-hardness of a multi-variate version and \cite[Corollary 2]{VertiganWelsh} 
for a result that implies \#P-hardness of the version considered here).

Let $G=(V,E)$ be a graph with $n$ vertices and 
$m>0$ edges.
We will show how to compute $\ZMonoI$
using the given oracle for $\Correlation$.

As we will see, the information provided by the oracle for $\Correlation$ can naturally be used to provide
a multiplicative approximation to $\ZMonoI$.
Since we need additive approximations in order to compute $\ZMonoI$ precisely,
we have to switch back and forth between additive and multiplicative
approximations. 
To this end, let $b=p/q$ for integers~$p$ and~$q$.
 Choose $m'= O(m)$ such that $b^{m'} \leq (1/q)^m$.

 Note that 
 $\ZMonoI = \sum_{j=0}^m b^j c_j$,
where $c_j$ is the number of configurations $\sigma\colon V\to\{-1,+1\}$
which induce $j$~edges with like spins in~$G$.
This implies that $\myb^m 2^n \leq \ZMonoI \leq 2^n$.
Now let $\delta = b^{m'} 2^{-(n+3)}$.
Suppose that $\widehat{Z}$ satisfies
\begin{equation}\label{eq:getmult}
e^{-\delta} \ZMonoI \leq \widehat{Z} \leq e^{\delta} \ZMonoI
\end{equation}
so that
$(1-\delta) \ZMonoI \leq \widehat{Z} \leq (1+2\delta) \ZMonoI$.
We conclude that
$$
|\widehat{Z} - \ZMonoI | \leq 2 \delta \ZMonoI \leq 2 \delta 2^n \leq \myb^{m'}/4
$$
so from $\widehat{Z}$ we learn $\ZMonoI$ precisely. To see this, note 
that any  interval of  length $b^{m'}/2$ contains the value of
 at most one   polynomial of the 
form $\sum_{j=0}^m b^j c_j$ with integer coefficients. 
Consider two such polynomials $Z_1 = \sum_{j=0}^m b^j c_j$ and $Z_2 =\sum_{j=0}^m b^j c'_j$,
both with integer coefficients. 
Set $a_j = c_j - c'_j$. Then 
$$Z_1 - Z_2 = \sum_{j=0}^m  \frac{a_j p^j }{ q^j} \leq \frac{\sum_{j=0}^m a_j p^j q^{m-j}}{q^m},$$ but
the numerator is an integer, so if $Z_1\neq Z_2$ then $|Z_1-Z_2| \geq 1/q^m \geq b^{m'}$,
so $Z_1$ and $Z_2$ cannot both be in an interval of length $b^{m'}/2$.

Thus, from now on, our goal will be to show how to
use the given oracle for $\Correlation$ to
obtain $\widehat{Z}$
satisfying~\eqref{eq:getmult}.  This will complete the proof of Theorem~\ref{thm:anti} and, a fortiori, Theorem~\ref{thm:negthm}.

Let the edges of $G$ be $e_1,\ldots,e_m$
and, for $j\in [m]$, let $G_j = (V,\{e_1,\ldots, e_j\})$. 
Denote the endpoints of $e_j$ by $s_j$ and $t_j$.
Using the notation from the beginning of the section, let
$\nu_j = \ZIstGb{s_j+}{t_j-}{G_{j-1}}{\sbeta}/\ZIstGb{s_j+}{t_j+}{G_{j-1}}{\sbeta}$
and let
$\alpha_j= (b+ \nu_j )/(1+ \nu_j)$.
Observe that $\ZMonoIG{G_j}=2(\myb \ZIstGb{s_j+}{t_j+}{G_{j-1}}{\sbeta}+\ZIstGb{s_j+}{t_j-}{G_{j-1}}{\sbeta})$  
and $\ZMonoIG{G_{j-1}}=2(  \ZIstGb{s_j+}{t_j+}{G_{j-1}}{\sbeta} +\ZIstGb{s_j+}{t_j-}{G_{j-1}}{\sbeta})$, and hence
$\ZMonoIG{G_j}=\alpha_j \ZMonoIG{G_{j-1}}$.
Therefore,
$$\ZMonoI = \ZMonoIG{G_m} = \left(\prod_{j=1}^m \alpha_j\right) \ZMonoIG{G_0} = 2^n \prod_{j=1}^m \alpha_j,$$
so to finish it suffices to show, for $j\in [m]$, that we can
use an oracle for $\Correlation$ to approximate $\alpha_j$ with  multiplicative error $\exp(\pm\delta/m)$.  

Suppose that we could produce $\hat{\nu}_j$
satisfying $|\hat{\nu}_j - \nu_j| \leq \myb \delta/(5 m)$. Then, setting
$\hat{\alpha}_j= (b+ \hat{\nu}_j )/(1+ \hat{\nu}_j)$, we have
 $\alpha_j \exp(-\delta/m)\leq  \hat{\alpha}_j \leq\alpha_j\exp(\delta/m)$, as required.

So, to finish it suffices to show, for $j\in [m]$, that we can
use an oracle for $\Correlation$ to approximate $\nu_j$ with additive error at most $\delta' = \myb \delta/(5 m)$. 
Our basic approach is the binary-search method that
the authors used in \cite{sign} to show that it is \#P-hard to compute the sign of the Tutte polynomial.

The invariant that we will maintain is that $\nu_j$ lies in an interval $[\nmin,\nmax]$.
We will repeatedly use the oracle to reduce the length of the interval by a constant factor, until
the length is at most $\delta'$ (in which case we can take $\hat{\nu_j}$ to be any point in the interval).
To initialise the search interval, 
we take $[\nmin,\nmax] = [b^n,b^{-n}]$.
It is clear that $\nu_j$ lies in this interval,    since flipping the spin at $t_j$ affects at most $n-1$ incident edges, and 
therefore changes the weight of  a configuration by a factor that is at least $b^n$ and at most $b^{-n}$.
 
Our basic approach is as follows.
Suppose that we can construct a graph $J$ with terminals $s$ and $t$ to $b$-implement
a point $\goal$ in the middle third of the interval.
 Let
$\bbeta$ be the edge-labelling of $G_j$ that assigns value $\goal$ to edge~$e_j$
and $\myb$ to all other edges. 
Let $G_j(J)$ be the graph formed from $G_j$ by replacing the edge~$e_j$ with the graph $J$
(identifying the terminal $s$ of $J$ with the vertex $s_j$ of $G_j$ and identifying the terminal $t$ of~$J$
with the vertex $t_j$ of $G_j$).
Then
$$ 
\Ex_{\piMonoIbeta{G_j(J)}{\myb}}[\sigma(s_j)\sigma(t_j)] = 
\Ex_{\piMonoIbeta{G_j}{\bbeta}}[\sigma(s_j)\sigma(t_j)] = 
\frac{
\goal \ZIstGb{s_j+}{t_j+}{G_{j-1}}{\sbeta}   -
\ZIstGb{s_j+}{t_j-}{G_{j-1}}{\sbeta} }
{\goal \ZIstGb{s_j+}{t_j+}{G_{j-1}}{\sbeta}   + 
\ZIstGb{s_j+}{t_j-}{G_{j-1}}{\sbeta}} .
$$
Using an oracle for $\Correlation$ we can 
determine either 
that this quantity is at least~$0$
(in which case $\goal \geq \nu_j$ and we can recurse on $[\nmin,\goal]$)
or that it is at most~$0$ (in which $\goal \leq \nu_j$
and we can recurse on $[\goal,\nmax]$).
Either way, the length of the new interval 
is at most~$2/3$ of the length that it was, so 
after $\text{O}(\log(1/b)(m'+n))$ iterations, the length of the interval will have shrunk to length at most $\delta'$, as required.

Finding the required~$J$ 
is straightforward --- this can be done by 
taking $\myb' = (\nmin+\nmax)/2$ to be the centre of the interval and
using Lemma~\ref{lem:J} with
inputs $n$, $\myb'$ and $\acc = \delta'/6$.
The size of $J$ is at most a polynomial in $n$ and $\log(1/\acc)$, which is polynomial in~$n$.
 \end{proof}

\bibliographystyle{plain}
\bibliography{\jobname}

\end{document}